\pgfplotsset{
/pgfplots/aan ybar legend/.style={
/pgfplots/legend image code/.code={
\draw [##1,/tikz/.cd,bar width=3.5pt,yshift=-0.3em]
plot coordinates {(0cm,0.8em) };}}}
\lstdefinestyle{mystyleC}{
    language=C,                
    basicstyle={\footnotesize\ttfamily\bfseries},
    morekeywords={hls::stream, StreamData},
    keywordstyle=\color{orange},
    commentstyle=\color{blue},
    stringstyle=\color{red},
    keywords=[2]{filter, read, write, mkarray, mkstream, divide, divide_rev, copy, initialize},
    keywordstyle=[2]{\color{violet}},
    numbers=left,              
    numberstyle=\tiny,        
    stepnumber=1,              
    numbersep=1pt,             
    frame=none,              
    tabsize=2,                 
    breaklines=true,           
    showstringspaces=false,    
}
\newcommand{\CC}{C\nolinebreak\hspace{-.05em}\raisebox{.4ex}{\tiny\bf +}\nolinebreak\hspace{-.10em}\raisebox{.4ex}{\tiny\bf +}}
\newcommand\bte[1]{\omit\rlap{\textcolor{blue}{#1}}}
\newcommand\Out{\mathit{out}}
\newcommand\In{\mathit{in}}
\newcommand\hd{\mathit{head}}
\newcommand\tl{\mathit{tail}}
\newcommand\bj[5]{#1\p \{#2\}#3\To #5\{#4\}}
\newcommand\bjs[4]{\bj{#1}{#2}{\bullet}{#3}{#4}}
\newcommand\sj[5]{#1\p \{#2\}#3\To #5\{#4\}}
\newcommand\iset[1]{\iota_{#1}}
\newcommand\Iset[2]{\mathit{Idx}(#1,#2)}
\newcommand\Iseq{\Theta}
\newcommand\setof[1]{\mathit{setof}(#1)}
\newcommand\length[1]{|#1|}
\newcommand\ATE{\Gamma}
\newcommand\BTE{\mathcal{B}}
\newcommand\forexp[5]{\mathbf{for}(#1 := #2; #1 \ne #3; #1 := #1+#4) #5}
\newcommand\whileexp[2]{\mathbf{while}\ #1\ \mathbf{do}\ #2}
\newcommand\RARR{\mathbf{rarray}}
\newcommand\WARR{\mathbf{warray}}
\newcommand\Strm{S}
\newcommand\iexp{u} 
\newcommand\texp{t} 
\newcommand\RStr[1]{{#1}.read()}
\newcommand\WStr[2]{{#1}.write(#2)}
\newcommand\BeginKer{[}
\newcommand\EndKer{]_K}
\newcommand\kernelexp[1]{[ #1 ]_K}
\newcommand\eval{\Downarrow}
\newcommand\set[1]{\{#1\}}
\newcommand\Imp{\Rightarrow}
\newcommand\COL{\mathbin{:}}
\newcommand\INT{\mathbf{int}}
\newcommand\BUF{\mathbf{buf}}
\newcommand\p{\vdash}
\newcommand\ty{\tau}
\newcommand\To{\Longrightarrow}
\newcommand\dom{\mathit{dom}}
\newcommand\ifexp[2]{\mathbf{if}\ #1\ \mathbf{then}\ #2\ \mathbf{else}\ }
\newcommand\OP{\mathtt{op}}
\newcommand\red{\longrightarrow}
\newcommand\TRUE{\mathbf{true}}
\newcommand\sconfig[3]{\langle #1, #2, #3 \rangle}
\newcommand\econfig[2]{\langle #1, #2 \rangle}
\newcommand\sstate[2]{\langle #1, #2 \rangle}
\newcommand\fliparray[1]{#1^\sharp}
\newcommand\flip{\fliparray}
\newcommand*\defeq{\stackrel{\text{def}}{=}}
\newcommand\Addr[2]{(#1,#2)}
\newcommand\seq[1]{\widetilde{#1}}
\newcommand\relRH[6]{\sstate{#1}{#2}\sim_{#5\mid#6}\sstate{#3}{#4}}
\newcommand\updateAr[3]{#1\set{#2\mapsto #3}}
\newcommand*{\low}{\mathit{low}}
\newcommand*{\high}{\mathit{high}}
\newcommand*{\awp}[2]{\mathrm{awp}(#1, #2)}
\newcommand*{\Inv}{\mathit{Inv}}
\newif\ifdraft\draftfalse
\newif\iffull\fulltrue
\newcommand\COM[1]{\textcolor{red}{[#1]}}
\newcommand\nk[1]{\textcolor{blue}{[#1 -nk]}}
\newcommand\sk[1]{\textcolor{teal}{{\footnotesize [#1 -ks]}}}
\newcommand\Revision[1]{\textcolor{red}{#1}}
\newcommand\COM[1]{}
\newcommand\nk[1]{}
\newcommand\sk[1]{}
\newcommand\Revision[1]{#1}
\newcommand\CameraReady[2]{#2}
\newcommand\CameraReady[2]{#1}
\begin{document}

\title{Relational Hoare Logic for High-Level Synthesis of Hardware Accelerators}

\author{Izumi Tanaka \and Ken Sakayori \and Shinya Takamaeda-Yamazaki \and Naoki Kobayashi}
\institute{The University of Tokyo, Japan\\
\email{\{i.tanaka,sakayori,shinya,koba\}@is.s.u-tokyo.ac.jp}}

\maketitle

\begin{abstract}
    High-level synthesis (HLS) is a powerful tool for developing efficient hardware accelerators
that rely on specialized memory systems to achieve sufficient on-chip data reuse and off-chip bandwidth utilization.
However, even with HLS, designing such systems still requires careful manual tuning,
as automatic optimizations provided by existing tools are highly sensitive to programming style
and often lack transparency.
To address these issues, 
we present a formal translation framework based on
relational Hoare logic, which enables robust and transparent transformations.
Our method recognizes complex memory access patterns in naïve
HLS programs and automatically transforms them by inserting on-chip buffers 
to enforce linear access to off-chip memory, and by replacing non-sequential processing with stream processing,
while preserving program semantics.
Experiments using our prototype translator, combined with an off-the-shelf
HLS compiler and a real FPGA board, have demonstrated significant performance improvements.

\end{abstract}

\section{Introduction}
\label{sec:intro}

With the end of Moore's Law approaching, 
advances in computing performance increasingly rely on architectural innovation rather than
continued transistor scaling~\cite{hennessy2021golden}.
Domain-specific hardware accelerators have been recognized as an effective solution
for high-performance and energy-efficient computing, particularly in fields such as machine learning~\cite{jouppi2017datacenter}, 
image processing~\cite{hajirassouliha2018suitability}, and other specialized applications~\cite{turakhia2018darwin,sen2013designing,lin2018architectural}.
Among various platforms---such as general-purpose processors, graphics processing units (GPUs), 
and application-specific integrated circuits (ASICs)---field-programmable gate arrays (FPGAs) offer a compelling balance of performance, 
flexibility, and energy efficiency, making them well-suited for developing customized accelerators.

High-level synthesis (HLS)~\cite{canis2011legup,cong2011high,xilinx2020_hls,pilato2013bambu,intel_hls_compiler}, 
which translates programs written in high-level languages such as \CC{} into hardware description languages (HDLs),
is a powerful method for developing application- or domain-specific hardware accelerators on FPGAs. 
This approach enables software developers to generate hardware without requiring in-depth knowledge of low-level design.

\begin{figure}
   \begin{lstlisting}[style=mystyleC]
 // Infers burst transfers for array accesses
 void divide(int* input, int* output) {
   for (int i = 0; i < N; i++) {
     output[i] = input[i] / 2;
   }
 }
   \end{lstlisting}
   \begin{lstlisting}[style=mystyleC]
 // Infers a single transfer for each word of arrays
 void divide_rev(int* input, int* output) {
   for (int i = N-1; i >= 0; i--) {
     output[i] = input[i] / 2;
   }
 }
  \end{lstlisting}
 \caption{Pitfall of Vitis HLS: The first (top) uses burst transfers, whereas the second (bottom) fails to do so.}
  \label{fig:intro-pitfall}
\end{figure}

Despite the abstraction provided by HLS, writing high-performance code remains challenging. 
Developers often need to manually insert annotations such as pragmas and carefully restructure their code to guide the compiler's optimization process.
In particular, achieving high-performance hardware acceleration depends on 
specialized memory systems that enable sufficient on-chip data reuse and efficient off-chip bandwidth utilization, 
both of which typically require careful manual optimization in HLS code.
While existing HLS tools can automatically apply a variety of optimizations, 
these optimizations often lack robustness, as their effectiveness heavily depends on the specific programming style.
For example, Vitis HLS~\cite{vitis_amd}, one of the most widely used HLS tools, 
can apply the burst data transfer optimization when memory is accessed consecutively in increasing order, 
as in the first program at the top of Figure~\ref{fig:intro-pitfall}. 
When this optimization is applied, the synthesized hardware can transfer 
multiple consecutive data between on-chip and off-chip memory in a single transaction, significantly improving performance;
indeed, our experimental results (presented in Section~\ref{sec:experiment}) show 
that the synthesized hardware runs more than five times faster than that without burst transfers.
However, the optimization is not applied to the second program at the bottom of Figure~\ref{fig:intro-pitfall}, 
where the memory is accessed in reverse order.
Although the two programs perform the same computation, the reverse-access version prevents Vitis HLS from recognizing the consecutive access pattern.
This example highlights how minor syntactic differences can lead to significant performance gaps.
Furthermore, because the internal optimization strategies of these tools are insufficiently documented
\Revision{and lack a formal specification of the underlying program transformations}, 
it is often unclear how to modify inefficient programs to achieve better results.

These observations indicate three major limitations of existing HLS tools:
(i) they require manual annotations and code restructuring to achieve high performance;
(ii) their automatic optimizations are sensitive to superficial variations; and
(iii) their optimization strategies lack transparency, making performance tuning difficult.

To address these issues, we propose a formally defined automatic translation framework for HLS programs,
grounded in programming language techniques, specifically \emph{relational Hoare logic}.
Hoare logic~\cite{10.1145/363235.363259} has been used as \emph{the} standard program logic for
proving the correctness of programs. It is robust in that it enables precise reasoning about
program behavior, regardless of the programming style, as guaranteed by the relative completeness theorem~\cite{doi:10.1137/0207005}. 
Relational Hoare logic~\cite{DBLP:conf/isola/Naumann20} is an extension of Hoare logic for reasoning about
the relationship between two programs. It has recently been used for reasoning about probabilistic programs and security properties~\cite{barthe2009formal, DBLP:journals/pacmpl/AvanziniBDG25,DBLP:conf/lics/NagasamudramN21}.
Here, we adopt and tailor relational Hoare logic to formalize program transformations for HLS,
leveraging well-established verification techniques to enable sound and automated translation.
This approach allows our method to recognize complex memory access patterns and systematically transform them,
ensuring robustness against variations in programming style.
For instance, the programs in Figure~\ref{fig:intro-pitfall} can be uniformly transformed into a form that enables burst data transfers,
regardless of their access order.
Another advantage of our formally defined transformation is its transparent and predictable behavior, in contrast to existing HLS tools.

\begin{figure}
  \begin{lstlisting}[style=mystyleC]
  // Infers a single transfer for each word of arrays
  void filter(int* input, int* output) {
    for (int i = 0; i < N-1; i++) {
      output[i] = (input[i] + input[i+1]) / 2;
    }
  }
  
  int main() {
    int input[N], output[N-1];
    for (int i = 0; i < N; i++) {
      input[i] := i;
    }
    filter(input, output);
    ...
	}

  \end{lstlisting}

  \begin{lstlisting}[style=mystyleC,numbersep=2pt]
  // Enables burst transfers via streaming interfaces
  typedef hls::axis<int, 0, 0, 0> StreamData;
  void filter(hls::stream<StreamData> &input,
              hls::stream<StreamData> &output) {
    int b0 = input.read();
    for (int i = 0; i < N-1; i++) {
      int b1 = input.read();
      output.write((b0 + b1) / 2);
      b0 = b1;
    } 
  }

  int main() {
		hls::stream<StreamData> input, output;
		for (int i = 0; i < N; i++) {
			input.write(i);
		}
		filter(input, output);
		... 
  }
  \end{lstlisting}
    \captionof{figure}{Naïve filtering program without burst transfers (top) vs. translated program with burst transfers (bottom).}
    \label{fig:intro1}
\end{figure}

We now present a brief overview of our translation method using a running example given in Figure~\ref{fig:intro1}.
The figure shows a naïve filtering program (top) and its translated counterpart (bottom).
Our translation relation takes the form \(\vdash \{ \BTE \} s \Longrightarrow t \{\BTE'\} \), which roughly means that, with the additional information specified as the pre-condition \( \BTE \), the source program \( s \) can be translated into the target program \( t \) (and the post-condition \(\BTE'\) holds after the execution of these programs).
A notable aspect of our formalization is that the pre- and post-conditions may refer to the access pattern of an array as a sequence of array indices.
For example, in line 13 of the naïve program, the pre-condition takes the form \( \{ \cdots \land \iset{input}=0 \cdot 1\cdots (N-1) \land \cdots \} \).
Here \( \iset{input}\) is a special variable to denote the access pattern of \texttt{input} and the equality indicates that the caller has \emph{written} to \texttt{input} in the order \(0\cdot 1\cdots (N-1)\).
The basic idea of the translation is to replace the array \texttt{input} to a stream if we can deduce that the array \texttt{input} is \emph{read} in this order in the body of \texttt{filter}.
Unfortunately, this is not the case because the read access pattern of \texttt{input} is \( 0 \cdot 1 \cdot 1\cdot 2\cdot 2 \cdots (N - 1) \).
To handle this kind of situation, we have a rule for on-chip buffer insertion 
so that an implicit non-linear array access to off-chip memory can be replaced by a read from the on-chip buffer.
Lines 5 and 7 of the second program are results of the buffer insertion rule.
After the buffer insertions, the duplications in the read access pattern of \texttt{input} are removed, and the access pattern becomes \( 0\cdot 1 \cdots (N - 1) \).
This allows us to translate \texttt{input} into a stream, and enables existing HLS tools to infer burst data transfers.


Our framework supports automatic translation---without requiring any manual annotations---from naïve programs
such as the first program in Figure~\ref{fig:intro1}.
This is achieved by using standard automated verification techniques based on Hoare logic and verification condition generation.
To enable automation,
we restrict the shape of the access pattern to \(i\cdot(i+k)\cdot(i+2k)\cdots (i+mk)\).
This restriction allows each access pattern to be concisely represented as a triple \((i,j,k)\) (where \(j=i+mk\)),
which in turn simplifies the verification conditions that are to be discharged by SMT solvers.
Despite this restriction, our method can still handle a wide range of access patterns---including reversed, overlapping, and non-consecutive accesses---as confirmed by our experiments.


We have implemented a prototype tool that automates the translation based on the formalization.
Figure~\ref{fig:overview} illustrates the tool architecture.
Our tool takes as input a source program written in a C-like language, which includes both 
a \emph{kernel function}\Revision{---a function designated to be synthesized into a hardware accelerator by an HLS tool---}and 
other components, such as a main function.
(The other components, commonly called \emph{host code}, manage the execution of the accelerator on the FPGA from the host CPU.)
The tool then performs translation (when applicable), and generates an optimized version of the kernel function and host code.
For our running example, the input program is at the top of Figure~\ref{fig:intro1}, 
and the output kernel and host codes are shown at the bottom of the same figure.
The kernel function is then compiled into hardware description language (HDL) using
Vitis HLS~\cite{vitis_amd} and synthesized into hardware with Vivado~\cite{vivado_amd}. 

As discussed in Section~\ref{sec:concl}, our framework could potentially be applied to various program analyses and transformations, 
such as identifying when tiling is required and coordinating multiple functional units through formally verified scheduling. 
These applications are made possible by the expressive reasoning power of relational Hoare logic, 
which allows flexible yet provably correct analyses and transformations. 

\usetikzlibrary{shapes.geometric, arrows, fit}

\tikzstyle{node} = [rectangle, rounded corners, minimum width=1.3cm, minimum height=0.8cm, text centered, draw=black, font=\small]
\tikzstyle{ourtool} = [rectangle, rounded corners, draw=blue, thick, inner sep=0.25cm]
\tikzstyle{arrow} = [thick,->,>=stealth]
\tikzstyle{dasharrow} = [thick,->,>=stealth, dashed]

\begin{figure*}[t]
    \centering
    \begin{tikzpicture}[node distance=2cm]

        \node (buffer) [node, xshift=2.5cm, align=center] {Translation};
        \node (emission) [node, right of=buffer, xshift=2.5cm, align=center] {Emission};
        \node (vivado) [node, below of=emission, yshift=-1.5cm, align=center] {Vivado~\cite{vivado_amd}};
        \node (vitis) [node, left of=vivado, xshift=-1.5cm, align=center] {Vitis HLS~\cite{vitis_amd}};
        
        \draw [arrow] (-1.5,0) -- node[anchor=south, align=center, font=\small] {Source Program} (buffer);
        \draw [arrow] (buffer) -- node[anchor=south, align=center, font=\small] {Target Program} (emission);
        \draw [arrow] (emission) --  node[anchor=south, align=center, font=\small] {Host Code} ++(3,0);
        \draw [arrow] (emission) --  node[anchor=east, align=center, font=\small] {Kernel Function\\(\CC{})} (vitis);
        \draw [arrow] (vitis) --  node[anchor=south, align=left, font=\small] {HDL} (vivado);
        \draw [arrow] (vivado) --  node[anchor=south, align=center, font=\small] {Hardware\\on FPGA} ++(3,0);
    
        \node [ourtool, fit=(buffer) (emission), label={[text=blue]above:\textbf{Our Tool}}] {};
    
    \end{tikzpicture}
    \caption{High-level synthesis toolchain with our tool.}
    \label{fig:overview}

\end{figure*}

The contributions of this paper are summarized as follows.
\begin{itemize}
\item Formalization of the program translation based on relational Hoare logic,
  and proofs of their correctness. To our knowledge, we are the first to
  employ relational Hoare logic to formalize HLS-level program transformations.
\item Automation of the translation by a suitable restriction on the shape of access patterns.
\item Implementation of the proposed method and experiments. Our experiments confirm the effectiveness of
  our approach: it achieves significant performance improvement in cases where Vitis HLS fails to optimize.
\end{itemize}

The rest of this paper is organized as follows. 
Section~\ref{sec:lang} introduces the source and target languages.
Section~\ref{sec:buffer} presents the formalization of our translation method.
Section~\ref{sec:auto} describes how to automate the translation.
Section~\ref{sec:experiment} reports experimental results.
Section~\ref{sec:related} discusses related work, and Section~\ref{sec:concl} concludes the paper.

\section{Language}
\label{sec:lang}
This section defines the syntax of the source and target languages
used in the formalization of our translation.
To highlight the core ideas, we deliberately keep the languages minimal.
The actual implementation supports more primitives such as
function definitions and adopts a C-like syntax.

\subsection{Syntax of the Languages}

The source language is essentially a \textsc{While} language~\cite{10.1145/363235.363259} with arrays, except that loops are expressed using ``for'' instead of ``while''.
The target language shares the same core syntax as the source language, but it uses streams instead of arrays.
Functions are omitted without loss of generality, 
since recursive functions are typically not supported in HLS programs, and non-recursive functions can be inlined.

The set of statements and arithmetic expressions are defined as follows:
\begin{align*}
    \text{(Statements)} \ s &::= {\color{blue} x := a[e] \mid a[e]:=x} \\ 
      &\mid {\color{olive} x := a.read() \mid a.write(x)} \\
      &\mid x := e \mid s_1; s_2 \mid \kernelexp{s} \\ 
      &\mid \ifexp{x}{s_1}{s_2}\\ 
      &\mid \forexp{x}{e}{m}{n}{s}\\
   \text{(Arithmetic expressions)} \ e &::= n\mid x\mid x\ \OP\ y
\end{align*}
\noindent
The first line defines the array access operations for the source language,
and the second line defines the stream operations for the target language.
The remaining lines define the syntax common to both the source and target languages.
Here, \(x, y, ...\) are variables, and
\(\OP\) denotes integer operations such as \(+\) and \(-\);
\(a\) denotes an array variable in the source language,
or a stream variable in the target language.\footnote{We assume that \emph{all} arrays are translated into streams by our translation to simplify the exposition.
  The translation can easily be modified to apply only to \emph{some} arrays, as in our implementation.}

We briefly explain the meaning of the statements; the formal semantics is given later in Section~\ref{sec:opsem}.
A branch statement \(\ifexp{x}{s_1}{s_2}\) executes \(s_1\) if \(x\) is non-zero, and \(s_2\) otherwise.
A for-loop is written as \(\forexp{x}{e}{m}{n}{s}\), which executes the body \(s\) while \(x \ne m\), 
allowing both ascending and descending iterations to be uniformly expressed.  
The construct \(\BeginKer\cdot\EndKer\) is just a marker indicating that \( s \) is kernel code.
It does not affect program execution but plays a role in the translation, as described in Section~\ref{sec:buffer}.
We require that \(\kernelexp{s}\) cannot be nested, i.e.~\(s\) cannot have a substatement of the form \(\kernelexp{s'}\).

\begin{example}
  \label{ex:filter_origin}
  Recall the naïve filtering function in Figure~\ref{fig:intro1}.
  The kernel function and host code which calls it can be expressed in our source language as follows:
  \begin{align*}
    & \forexp{x}{0}{N}{1}{} \{ \\
    & \quad in[x] := x; \\
    & \}; \\
    & \BeginKer \\
    & \forexp{x}{0}{N-1}{1}{} \{ \\
    & \quad y_0 := in[x];\\
    & \quad y_1 := in[x+1];\\
    & \quad z_0 := y_0 + y_1;\\
    & \quad z_1 := z_0 / 2;\\
    & \quad out[x] := z_1\\
    & \} \\
    & \EndKer
  \end{align*}
  
  \noindent
  The statement of the kernel function is indicated by \(\BeginKer\) and \(\EndKer\). 
  We will use (variants of) this code as a running example throughout the paper.
  \qed
\end{example}

\subsection{Operational Semantics}
\label{sec:opsem}

This section defines the big-step operational semantics of the source language.
The reduction relation is of the form \( \sconfig R H s \eval \sstate {R'} {H'} \) meaning that if we evaluate the statement \( s \) under the \emph{state} \( \sstate R H \) we result in a state \( \sstate {R'} {H'} \).
A state is a pair of a \emph{register file} \( R \), which is a finite map from variables to integers, and a \emph{heap} \( H \), which is a finite map from \emph{addresses} of the form \( (a, m) \) to integers.
Intuitively, the array variable \( a \) represents an address of a base pointer and the integer \( m \) represents the offset from the base pointer; since we use array variables as a part of the address, different arrays never collide.
We write \( \dom(H) \) for the domain of \( H \), and \( H \set{(a, m) \mapsto n}\) for a heap that maps \( (a, m) \) to \( n \) and maps all the other addresses according to \( H \).
Notations \( \dom(R) \) and \( R\set{x \mapsto n} \) for register files are defined similarly.

\begin{figure*}
\fbox{\( \econfig{R}{e} \eval n \)}\hfill\phantom{dummy}\par
\begin{minipage}{.28\textwidth}
  \infrule[R-Const]
    {}
    {\econfig{R}{n} \eval n}
\end{minipage}
\begin{minipage}{.25\textwidth}
\infrule[R-Var]
    {R(x)=n}
    {\econfig{R}{x} \eval n}
\end{minipage}
\begin{minipage}{.42\textwidth}
\infrule[R-Op]
    {R(x)=n_1\andalso R(y)=n_2}
    {\econfig{R}{x\ \OP\ y} \eval{\llbracket\OP\rrbracket(n_1, n_2)}}
\end{minipage}
\par\bigskip
\fbox{\( \sconfig{R}{H}{s} \eval \sstate{R'}{H'} \)}\hfill\phantom{dummy}
\par
\bigskip
\begin{minipage}{.48\textwidth}
\infrule[R-ReadArray]
    {\econfig{R}{e}\eval m\andalso H(a,m) = n}
    {\sconfig{R}{H}{x:=a[e]} \eval \sstate{R\set{x\mapsto n}}{H}}
\end{minipage}%
\begin{minipage}{.5\textwidth}
\infrule[R-WriteArray]
    {\econfig{R}{e}\eval m} 
    {\sconfig{R}{H}{a[e]:=x} \eval \sstate{R}{H\set{\Addr{a}{m}\mapsto R(x)}}}
\end{minipage}
\par\bigskip
\begin{minipage}{.43\textwidth}
\infrule[R-Assign]
    {\econfig{R}{e}\eval n}
    {\sconfig{R}{H}{x:=e} \eval \sstate{R\set{x\mapsto n}}{H}}
\end{minipage}%
\begin{minipage}{.5\textwidth}
\infrule[R-Seq]
        {\sconfig{R}{H}{s_1}\eval \sstate{R''}{H''}\andalso
         \sconfig{R''}{H''}{s_2}\eval \sstate{R'}{H'}}
    {\sconfig{R}{H}{s_1;s_2} \red \sstate{R'}{H'}}
\end{minipage}

\par\bigskip
\begin{minipage}{.5\textwidth}
  \infrule[R-IfTrue]
    {R(x) \ne 0\andalso \sconfig{R}{H}{s_1}\eval \sstate{R'}{H'}}
    {\sconfig{R}{H}{\ifexp{x}{s_1}{s_2}} \eval \sstate{R'}{H'}}
\end{minipage}%
\begin{minipage}{.5\textwidth}
\infrule[R-IfFalse]
    {R(x) = 0\andalso \sconfig{R}{H}{s_2}\eval \sstate{R'}{H'}}
    {\sconfig{R}{H}{\ifexp{x}{s_1}{s_2}} \eval \sstate{R'}{H'}}
\end{minipage}

\infrule[R-ForLoop]
        {\econfig{R}{e}\eval k \andalso k \ne m \\
          \sconfig{R\set{x\mapsto k}}{H}{s'}\eval \sstate{R'}{H'} \quad
        s' \equiv {s; \forexp{x}{x+n}{m}{n}{s}}}
    {\sconfig{R}{H}{\forexp{x}{e}{m}{n}{s}}\eval \sstate{R'}{H'}}

\begin{minipage}[b]{.65\textwidth}%
 \infrule[R-ForExit]
    {\econfig{R}{e}\eval k \andalso k = m}
    {\sconfig{R}{H}{\forexp{x}{e}{m}{n}{s}}\eval \sstate{R\set{x \mapsto k}}{H}}                              %
\end{minipage}%
\begin{minipage}[b]{.35\textwidth}
\infrule[R-CallKer]
        {\sconfig{R}{H}{s}\eval \sstate{R'}{H'}}
        {\sconfig{R}{H}{\kernelexp{s}}\eval \sstate{R'}{H'}}
\end{minipage}

    \caption{Reduction rules for the source language.}
    \label{fig:lang}

\end{figure*}

\begin{figure*}
%
\infrule[R-Read]
    {\Strm(a) = n\cdot \seq{m}}
    {\sconfig{R}{\Strm}{x:=\RStr{a}} \eval \sstate{R\set{x\mapsto n}}
      {\Strm\set{a\mapsto \seq{m}}}}
\par\bigskip
\infrule[R-Write]
    {} 
    {\sconfig{R}{\Strm}{\WStr{a}{x}}
      \eval \sstate{R}{\Strm\set{a\mapsto \Strm(a)\cdot R(x)}}}
    \caption{Reduction rules for the stream operations in the target language.}
    \label{fig:lang2}
\end{figure*}

The reduction relation for the source language is defined by the rules in Figure~\ref{fig:lang}, and the rules are mostly standard.
We only explain points that may need clarification.
In rule \rn{R-Op}, \( \llbracket \OP \rrbracket \) denotes the mathematical function represented by \( \OP \).
The loop expression \(\forexp{x}{e}{m}{n}{s}\) first executes the initialization statement \( x := e \), which is executed only once, and then checks the condition \( x \ne m \).
If the condition does not hold, the for-loop terminates as expressed by \rn{R-ForExit}.
Otherwise, the statement \( s \) inside the body of the for-loop is executed, and the loop continues with \(x\) incremented by \( n \) as defined by \rn{R-ForLoop}.
We note that the variable \( x \) is not local to the for-loop. (In fact, all variables in the program are global variables.)
The rule \rn{R-CallKer} just says that \(\BeginKer\) and \(\EndKer\) do not affect the reduction.

The reduction rules for the stream operations in the target language are given in Figure~\ref{fig:lang2}.
Here the state takes a \emph{stream pool} \( S \) instead of a heap.
A stream pool is a finite map from array variables to (possibly empty) integer sequences.
We write \( \seq m \) for an integer sequence and \( n \cdot \seq m \) (resp.~\( \seq m \cdot n \)) for the sequence obtained by adding \( n \) to the head (resp.~end) of \( \seq m\).
The rule \rn{R-Read} pops an integer from the head of a stream, and assigns it to \( x \) (if the stream is empty, the reduction is stuck), and \rn{R-Write} adds the value assigned to \( x \) to the end of the sequence.
The remaining rules are identical to those of the source language, and are therefore omitted.

\section{Formalization of Translation}
\label{sec:buffer}

This section defines the translation for inserting buffering commands and replacing array accesses with stream operations.
The translation relation is defined by using a kind of relational Hoare logic.
The source and target languages for the translation are the language defined in Section~\ref{sec:lang}.

\subsection{Translation Relation}
\label{sec:buffer-judgment}

The set of types, ranged over by \(\ty\), is given by:
\begin{align*}
  \ty&::=\INT \mid \BUF \mid \RARR \mid \WARR
\end{align*}
Here, \(\INT\) is the type of integer expressions, and \(\BUF\) is the type of buffer variables;
a variable of type \(\BUF\) is also used to store an integer, but only occurs in the target of the translation.
The types \(\RARR\) and \(\WARR\) respectively describe read- and write-only arrays.

The translation relation is of the form \(\bj{\ATE}{\BTE}{s}{\BTE'}{t}\), where
(i) \(\ATE\) is a type environment of the form \(x_1\COL\ty_1,\ldots,x_n\COL\ty_n\);
(ii) \(s\) and \(t\) are source and target programs respectively;
and
(iii) \(\BTE\) and \(\BTE'\) are respectively the pre- and post-conditions;
i.e., \(\BTE\) (\(\BTE'\), resp.) describes a (relational) property that holds for 
the states of the source and target programs before (after, resp.) the executions of \(s\) and \(t\).
We will also introduce an auxiliary relation of the form \(\bjs{\ATE}{\BTE}{\BTE'}{t}\), which 
means that \(t\) may be inserted in the place where \(\BTE\) holds; \(\bullet\) may be regarded as
a skip command.

For example, we have:
\begin{align*}
  \bj{x\COL\INT,a\COL\RARR,b\COL\BUF}{b=a[1]}{x:=a[1]}{x=a[1]}{x:=b}.
\end{align*}
It means that if \(b=a[1]\) holds, i.e., the value of \(a[1]\) is stored in \(b\)
(which serves as a buffer),
then we can replace \(x:=a[1]\) in the source program with \(x:=b\), to avoid reading \(a[1]\) again.
Specifying relational properties of programs using pre- and post-conditions is in the style of relational Hoare logic.
In the above example, the pre-condition expresses a relational property:
\(b\) refers to the value of \(b\) in the target program, while \(a[1]\) refers to the value of
\(a[1]\) in the source program.

Each array \(a\) in the source program \(s\) is converted to a stream (of the same name \(a\)) in the target program \(t\).
For each array variable \(a\), \(\BTE\) may refer to the special variable \(\iset{a}\),
which denotes a \emph{sequence} of indices of \(a\),
indicating the order in which elements are stored in the stream \(a\) in the target program.
For instance, \(\iset{a}=1\cdot 3\cdot 5\)
means that the values of \(a[1]\), \(a[3]\) and \(a[5]\) are stored in this order in the stream \(a\).
As a concrete example, we have:
\begin{align*}
  \bj{\ATE}{\iset{a}=1\cdot J}{x:=a[1]}{\iset{a}=J\land x=b=a[1]}{b:=\RStr{a};x:=b}
\end{align*}
where \(\ATE = x\COL\INT,a\COL\RARR,b\COL\BUF\).
The judgment means that if \(\iset{a}=1\cdot J\) holds,
then the command \(x:=a[1]\) may be replaced with \(b:=\RStr{a};x:=b\),
and \(\iset{a}=J\land x=b=a[1]\) hold afterwards.
The pre-condition \(\iset{a}=1\cdot J\) ensures that the value of \(a[1]\) is stored in the head of the stream in the target
program; thus, the array read \(a[1]\) can safely be replaced with the stream read \(\RStr{a}\), and \(\iset{a}\) is updated to
\(J\) afterwards, as expressed by the post-condition \(\iset{a}=J\). 
The other post-condition \(x=b=a[1]\) states that \(x\) and \(b\) hold the value of \(a[1]\) in the source program.
\Revision{Note that a sequence of indices refers to a finite sequence, whose elements may include variables or expressions.}


\subsection{Translation Rules}

\begin{figure*}[tbp]
\typicallabel{Tr-ReadMem}

\vspace*{1ex}
\infrule[Tr-ReadMem]
        {\ATE(x)=\INT\andalso \ATE(a)=\RARR\\ \ATE\p e:\INT\andalso \ATE(b)=\BUF}
{\bj{\ATE}{b=a[e]\land [b/x]\BTE}{x := a[e]}{\BTE}{x := b}}

\vspace*{1ex}
\infrule[Tr-WriteMem]
{\ATE(x)=\INT\andalso \ATE(a)=\WARR\\ \ATE\p e:\INT\andalso \ATE(b)=\BUF}
{\bj{\ATE}{e\not\in \iset{a}\land [x/b, \updateAr{a}{e}{x}/a]\BTE}{a[e] := x}{\BTE}{b := x}}

\vspace*{1ex}
\infrule[Tr-Assign]
{\ATE(x)=\INT\andalso \ATE\p e:\INT}
{\bj{\ATE}{[e/x]\BTE}{x := e}{\BTE}{x := e}}


\vspace*{1ex}
\infrule[Tr-Seq]
{\bj{\ATE}{\BTE}{s_1}{\BTE''}{\texp_1} \andalso
 \bj{\ATE}{\BTE''}{s_2}{\BTE'}{\texp_2}}
{\bj{\ATE}{\BTE}{s_1;s_2}{\BTE'}{\texp_1;\texp_2}}

\vspace*{1ex}
\infrule[Tr-If]
{\ATE(x)=\INT \andalso \bj{\ATE}{\BTE\land x\ne 0}{s_1}{\BTE'}{\texp_1} \\ 
 \bj{\ATE}{\BTE\land x= 0}{s_2}{\BTE'}{\texp_2}}
{\bj{\ATE}{\BTE}{\ifexp{x}{s_1}{s_2}}{\BTE'}{\ifexp{x}{\texp_1}{\texp_2}}}

\vspace*{1ex}
\infrule[Tr-For]
{  \bj{\ATE}{\BTE \land x \ne m}{s}{[x+n/x]\BTE}{\texp}}
{\ATE \p \{[e/x]\BTE\}\forexp{x}{e}{m}{n}s \\
 \To \forexp{x}{e}{m}{n}{\texp}\{\BTE\land x=m\}}



\vspace*{1ex}
\infrule[Tr-InsertL]
{\bjs{\ATE}{\BTE}{\BTE''}{\texp_0}\andalso  \bj{\ATE}{\BTE''}{s}{\BTE'}{\texp}}
{ \bj{\ATE}{\BTE}{s}{\BTE'}{\texp_0;\texp}}

\vspace*{1ex}
\infrule[Tr-InsertR]
{\bj{\ATE}{\BTE}{s}{\BTE''}{\texp}\andalso \bjs{\ATE}{\BTE''}{\BTE'}{\texp_0}}
{ \bj{\ATE}{\BTE}{s}{\BTE'}{\texp;\texp_0}}

\vspace*{1ex}
\infrule[Tr-InsRBuf]
        {\ATE(a)=\RARR\andalso \ATE(b)=\BUF\\
         \BTE =  [a[\hd(\iset{a})]/b,\tl(\iset{a})/\iset{a}]\BTE'}
        { \bjs{\ATE}{\BTE}{\BTE'}
          {b:=\RStr{a}}}
\vspace*{1ex}
\infrule[Tr-InsWBuf]{\ATE(a)=\WARR\andalso  \ATE(b)=\BUF}
        {\bjs{\ATE}{n \not\in \iset{a}\land
                a[n]=b\land [ \iset{a}\cdot n/\iset{a}]\BTE}{\BTE}{\WStr{a}{b}}}
        
%

\vspace*{1ex}
\infrule[Tr-InsMove]
{\ATE(x)=\BUF\andalso \ATE(y)\in \set{\INT,\BUF}}
{ \bjs{\ATE}{[y/x]\BTE}{\BTE}{x:=y}}

\vspace*{1ex}
\infrule[Tr-Conseq]
        {\models \BTE\Imp \BTE_1\andalso \bj{\ATE}{\BTE_1}{s}{\BTE_1'}{\texp}\andalso \models \BTE_1'\Imp \BTE'}
        {\bj{\ATE}{\BTE}{s}{\BTE'}{\texp}}

\vspace*{1ex}
\infrule[Tr-InsConseq]
        {\models \BTE\Imp \BTE_1\andalso \bjs{\ATE}{\BTE_1}{\BTE_1'}{\texp}\andalso \models \BTE_1'\Imp \BTE'}
        {\bjs{\ATE}{\BTE}{\BTE'}{\texp}}

\vspace*{1ex}
\infrule[Tr-Kernel]
{\bj{\fliparray{\ATE}}{\BTE}{s}{\BTE'}{\texp}}
{\bj{\ATE}{\BTE}{\kernelexp{s}}{\BTE'}{\kernelexp{\texp}}}

\caption{Translation rules. \(\ATE\p e:\INT\) means that all the variables in \(e\) have type \(\INT\).}
\label{fig:fig-buf1}

\end{figure*}

The translation relation \(\bj{\ATE}{\BTE}{s}{\BTE'}{t}\) is defined by the rules in Figure~\ref{fig:fig-buf1}.
In the figure, \([e_1/x_1,\ldots,e_n/x_n]\) denotes the simultaneous substitutions of \(e_1,\ldots,e_n\) for \(x_1,\ldots,x_n\).
The expression \(a\set{e\mapsto x}\) denotes the array obtained from \(a\) by replacing the value at index \(e\) with \(x\);
the syntax of the assertion language for describing \(\BTE\) 
is\CameraReady{
given in the full version~\cite{tanaka2026relationalhoarelogichighlevel}.
}{
given in Appendix~\ref{appx:correctness-stream}.
}  

Most of the rules in the figure are analogous to the corresponding (partial correctness) rules of Hoare logic.
In the rule \rn{Tr-ReadMem}, a read from the array \(a\) is translated to a read from the buffer variable \(b\),
on the assumption that the value of \(a[e]\) has already been copied to \(b\); that is achieved by the rule \rn{Tr-InsRBuf} explained later. 
Thus, when \(a[e]\) is accessed for the first time, \(x:=a[e]\) is translated to \(b:=\RStr{a}; x:=b\), which can be optimized to
\(x:=\RStr{a}\) if the value of \(a[e]\) is not used later; this redundancy is for the sake of simplicity, and in the actual implementation,
\(x:=a[e]\) is just transformed to \(x:=\RStr{a}\) if \(a[e]\) is not accessed more than once.
The pre-condition \(b=a[e]\) asserts that \(b\) indeed holds the value of \(a[e]\). We additionally require
\([b/x]\BTE\) as a pre-condition to ensure that \(\BTE\) holds after the execution of \(x:=a[e]\) and \(x:=b\).
This is analogous to the rule \(\set{[b/x]\BTE}x:=b\set{\BTE}\) for assignment in the standard Hoare logic.
In our case, \(x\) is updated to \(a[e]\) and \(b\) respectively in the source and target programs, but requiring
\([b/x]\BTE\) suffices since the other condition \(b=a[e]\) ensures that \([a[e]/x]\BTE\) also holds.

Similarly, in the rule \rn{Tr-WriteMem}, a write to the array \(a\) is translated to a write to the buffer variable \(b\).
The code for writing the contents of the buffer to the stream is then inserted by the rule \rn{Tr-InsWBuf} explained later.
Thus, if \(a[e]\) is updated just once in the source program, \(a[e]:=x\) is translated to \(b:=x; \WStr{a}{b}\), which can be optimized
to \(\WStr{a}{x}\).
This is again for the sake of simplicity; \(a[e]:=x\) is directly translated to \(\WStr{a}{x}\) in the actual implementation in such a case.
The pre-condition \(e\not\in \iset{a}\) checks that \(a[e]\) has not been written before. 
The part \([x/b, \updateAr{a}{e}{x}/a]\BTE\) corresponds to the precondition for
the two successive assignments \(a[e]:=x;b:=x\) in the standard Hoare logic.
Here, the substitution can be applied in parallel because
the two assignments do not interfere; note that (i) the buffer variable \(b\) may occur only in the target program,
and (ii) the array variable \(a\) in pre-/post-conditions refers to that of the source program.

The rules \rn{Tr-Assign}, \rn{Tr-Seq}, and \rn{Tr-If} are just relational versions of the corresponding rules
for Hoare logic. The rule \rn{Tr-For} is also essentially a relational version of the rule for loops in Hoare logic,
except that we chose for-statements as primitives instead of while-statements.
With while-statements as primitives, the rule \rn{Tr-For} would be derived from the following rule:
\infrule[Tr-While]
  {\bj{\ATE}{\BTE \land \mathit{cond}}{s}{\BTE}{t}}
  {\bj{\ATE}{\BTE}{\whileexp{\mathit{cond}}{s}}{\BTE\land \neg \mathit{cond}}{\whileexp{\mathit{cond}}{t}}}
Note that, using a while-statement, \(\forexp{x}{e}{m}{n}s\) can be expressed as
\(
x := e; \mathbf{while}\ x \ne m\ \mathbf{do}\ (s; x:=x+n)
\).

The rules \rn{Tr-InsertL} and \rn{Tr-InsertR} are for inserting buffering commands.
Here, we use the auxiliary relation \(\bjs{\ATE}{\BTE}{\BTE'}{t}\) (where \(\bullet\) can be considered a skip command)
for inserting \(u\) in the target program.
The rule \rn{Tr-InsRBuf} is for copying a value from a stream to a buffer. 
The substitution \([\tl(\iset{a})/\iset{a}]\) reflects the 
update (\(\iset{a} := \tl(\iset{a})\)) of the sequence of array indices stored in the stream.
Here, \(\hd(J)\) denotes the first element of the sequence \(J\), \(\tl(J)\) denotes the sequence obtained by removing the first element from \(J\).
In the rule \rn{Tr-InsWBuf} for copying a value from a buffer to a stream,
the pre-condition \(n \not\in \iset{a}\) checks that \(a[n]\) has not been written before. The substitution
\([(\iset{a}\cdot n)/\iset{a}]\) reflects the fact that the value of \(a[n]\) is added at the end of the stream.
Here, \(J_1\cdot J_2\) denotes the concatenation of two sequences \(J_1\) and \(J_2\).

The rule \rn{Tr-InsMove} is for moving the value of a buffer or integer variable to another buffer variable;
that is needed, for example, for inserting \(\texttt{b0 = b1}\) in Figure~\ref{fig:intro1} given in Section~\ref{sec:intro}.
The rules \rn{Tr-Conseq} and \rn{Tr-InsConseq} are for adjusting pre-/post-conditions, corresponding to the consequence rule of
the standard Hoare logic. 
The rule \rn{Tr-Kernel} handles the swapping of read-only and write-only array types when entering a kernel function.
The operation \(\fliparray{\ATE}\) on type environments flips between read/write-only array types, as defined by:
\begin{align*}
& \fliparray{(a_1\COL\ty_1,\ldots,a_n\COL\ty_n)}=a_1\COL \flip{\ty_1},\ldots,a_n\COL\flip{\ty_n}\qquad
 {\ty}^{\sharp} = \left\{\begin{array}{ll}
  \RARR & \text{if } \ty=\WARR \\
  \WARR & \text{if } \ty=\RARR \\
  \ty & \text{otherwise}
  \end{array}\right.
\end{align*}
This flipping is necessary because arrays read (resp. written) inside the kernel are written (resp. read) by the host code.

\begin{example}
  Recall the following judgment considered in Section~\ref{sec:buffer-judgment}.
  \[ \bj{\ATE}{\iset{a}=1\cdot J}{x:=a[1]}{\iset{a}=J\land x=b=a[1]}{b:=\RStr{a};x:=b}.\]
  It is obtained from the following two judgments by using \rn{Tr-InsertL}:
  \begin{align*}
  & \bjs{\ATE}{\iset{a}=1\cdot J}{\iset{a}=J\land b=a[1]}{b:=\RStr{a}}\\
  & \bj{\ATE}{\iset{a}=J\land b=a[1]}{x:=a[1]}{\iset{a}=J\land x=b=a[1]}{x:=b}.
  \end{align*}
  The first judgment is obtained as follows.
  \[
  \infer[\rn{\small (Tr-InsConseq)}]{\bjs{\ATE}{\iset{a}=1\cdot J}{\BTE}{b:=\RStr{a}}}
        {\infer[\rn{\small (Tr-InsRBuf)}]{\bjs{\ATE}{\hd(\iset{a})=1 \land 
            [a[1]/b, \tl(\iset{a})/\iset{a}]\BTE}{\BTE}{b:=\RStr{a}}}{}}
  \]
  Here, \(\BTE \equiv \iset{a}=J\land b=a[1]\).
  Note that
        \(\iset{a}=1\cdot J\) implies
        \(\hd(\iset{a})=1\land[a[1]/b, \tl(\iset{a})/\iset{a}]\BTE\), i.e., \(\hd(\iset{a})=1\land\tl(\iset{a})=J\land a[1]=a[1]\).
  The second judgment can be obtained by using
  \rn{Tr-ReadMem} and \rn{Tr-InsConseq}. \qed
\end{example}

\begin{figure*}[tbp]
      \begin{tikzpicture}
        \fill[teal!5,rounded corners=5pt] (-6.4, -7.6) rectangle (-1.1, 7.7);
        \fill[orange!5,rounded corners=5pt] (0.4, -7.6) rectangle (5.7, 7.7);
        \node at (0, 0) [align=center] {
          \begin{minipage}{\textwidth}
            \begin{flalign*}
              \bte{//  $\ATE; \iset{\In}=\iset{\Out}=\epsilon$}\\
              & \forexp{x}{0}{N}{1}{}\{ && \forexp{x}{0}{N}{1}{}\{\\
              \bte{//  $\ATE; \iset{\In}= 0\cdots (x-1)\land \iset{\Out}=\epsilon\land x \ne N$}\\
              & \quad in[x] := x; && \quad b := x;\\
              \bte{//  $\ATE; \iset{\In}= 0\cdots (x-1)\land \iset{\Out}=\epsilon \land b=in[x] \land x \ne N$}\\
              &  && \quad \WStr{in}{b};\\
              \bte{//  $\ATE; \iset{\In}= 0\cdots x \land \iset{\Out}=\epsilon\land x \ne N$}\\
              & \}; && \};\\
              \bte{//  $\ATE; \iset{\In}= 0\cdots (N-1)\land \iset{\Out}=\epsilon$}\\
              &  \BeginKer && \BeginKer \\
              \bte{//  $\ATE^{\sharp}; \iset{\In}= 0\cdots (N-1)\land \iset{\Out}=\epsilon$}\\
              &  && b_0 := \RStr{in}\\
              \bte{//  $\ATE^{\sharp}; \iset{\In}= 1\cdots (N-1)\land \iset{\Out}=\epsilon\land b_0=in[0]$}\\
              & \forexp{x}{0}{N-1}{1}{}\{  && \forexp{x}{0}{N-1}{1}{}\{ \hspace{1.5em} \\ 
              \bte{//  $\ATE^{\sharp}; \iset{\In}= (x+1)\cdots (N-1)\land \iset{\Out}=0\cdots (x-1)\land b_0=in[x] \land x \ne N-1$}\\
              & \quad y_0 := in[x];&& \quad y_0 := b_0;\\
              \bte{//  $\ATE^{\sharp}; \iset{\In}= (x+1)\cdots (N-1)\land \iset{\Out}=0\cdots (x-1)\land x \ne N-1$}\\
              & \quad && \quad b_0 := \RStr{in}\\
              \bte{//  $\ATE^{\sharp}; \iset{\In}= (x+2)\cdots (N-1)\land \iset{\Out}=0\cdots (x-1)\land b_0=in[x+1] \land x \ne N-1$}\\
              & \quad y_1 := in[x+1]; && \quad y_1 := b_0;\\
              & \quad z_0 := y_0 + y_1;    && \quad z_0 := y_0 + y_1;\\
              & \quad z_1 := z_0 / 2;  && \quad z_1 := z_0 / 2;\\
              \bte{//  $\ATE^{\sharp}; \iset{\In}= (x+2)\cdots (N-1)\land \iset{\Out}=0\cdots (x-1)\land b_0=in[x+1] \land x \ne N-1$}\\
              & \quad out[x] := z_1 && \quad b_1 := z_1;\\
              \bte{//  $\ATE^{\sharp}; \iset{\In}= (x+2)\cdots (N-1)\land \iset{\Out}=0\cdots (x-1)\land b_0=in[x+1]$}\\
              && \bte{ \hspace{2.5em} $ {} \land b_1=out[x] \land x \ne N-1$}\\ 
              & \quad  && \quad \WStr{out}{b_1}\\
              \bte{//  $\ATE^{\sharp}; \iset{\In}= (x+2)\cdots (N-1)\land \iset{\Out}=0\cdots x \land b_0=in[x+1] \land x \ne N-1$}\\
              &     \} && \}\\
              \bte{//  $\ATE^{\sharp}; \iset{\In}= \epsilon\land \iset{\Out}=0\cdots (N-2)$}\\
              &  \EndKer && \EndKer \\
              \bte{//  $\ATE; \iset{\In}= \epsilon\land \iset{\Out}=0\cdots (N-2)$}\\
            \end{flalign*}
          \end{minipage}
        };
      \end{tikzpicture}
    \caption{Translation for the running example. The left side shows the original program, and the right side shows the transformed program. 
              Comments indicate \(\BTE\) at each program point. 
              \Revision{(Here, for simplicity, we omit conditions that are not relevant to buffering or stream replacement.)}}
    \label{fig:ex-stream}
\end{figure*}

\begin{example}
  \label{ex:buffer}
  Recall the program in Example~\ref{ex:filter_origin}:
    \begin{align*}
      & \forexp{x}{0}{N}{1}{}  \\
      & \quad \{ in[x] := x \}; \\
      & \forexp{x}{0}{N-1}{1}{}  \\
      & \quad \{ y_0 := in[x]; 
      y_1 := in[x+1]; 
      z_0 := y_0 + y_1; 
      z_1 := z_0 / 2; 
      out[x] := z_1 \}.
    \end{align*}
    It is transformed as shown in Figure~\ref{fig:ex-stream}.
    For readability, instead of showing a derivation tree based on the transformation rules,
    the figure shows the source and target programs side-by-side, with \(\BTE\) for each program point
    shown as a comment.
    The type environment \(\ATE\) used in the translation is:
    \begin{align*}
      \In\COL \RARR, \Out\COL\WARR, y_0\COL\INT, y_1\COL\INT, z_0\COL\INT, z_1\COL\INT, b_0\COL\BUF, b_1\COL\BUF.
    \end{align*}
    For example, the three lines before the second for-loop represent the judgment:
    \begin{align*}
      \ATE^{\sharp} \p \{\iset{\In}= 0 & \cdots (N-1) \land \iset{\Out}=\emptyset\} \bullet \To \\
      & b_0:=\RStr{\In} \{\iset{\In}=1 \cdots (N-1)\land \iset{\Out}=\emptyset\land b_0=\In[0]\}.
    \end{align*}
    It is obtained by using the rules \rn{Tr-InsRBuf} and \rn{Tr-InsConseq} as in the previous example.
    \qed
\end{example}

Note that the order of array access does not matter
as long as the producer and consumer follow the same access sequence.
For example, consider a filter program that reads only the even indices of the input array.
In this case, the access sequence in the pre-/post-condition becomes \(\iset{\In} = 0\cdot 2\cdots (2N-2)\).
While Vitis HLS cannot apply burst data transfer optimization to such non-consecutive patterns,
our translation treats them in the same way as for a program with consecutive accesses.
Similarly, the programs in Figure~\ref{fig:intro-pitfall} can be translated identically, 
regardless of whether the access order is forward or reversed.

\newcommand\Inp[1]{\mathit{in}_{#1}}
\newcommand\IF{\mathbf{if}}
\newcommand\ELSE{\mathbf{else}}
\newcommand\btes[1]{\bte{\footnotesize #1}}
\newcommand\emptyseq{\epsilon}
\newcommand\Dots{..}

\begin{example}
  \label{ex:merge}
    Figure~\ref{fig:ex-stream-merge} shows a translation
    for a program that merges two arrays \(\Inp1\) and \(\Inp2\) into \(\Out\) in the increasing order,
    assuming that the elements of \(\Inp1\) and \(\Inp2\) are sorted in the increasing order. For readability, the example uses the ordinary if-expressions, which are easy to support.
    As in Example~\ref{ex:buffer}, the source and target of the stream translation
    are shown side-by-side, and comments denote \(\BTE\) used at each program point.
    We write \(i\Dots j\) for the sequence consisting of integers from \(i\) to \(j\); it denotes the empty sequence
    \(\emptyseq\) when \(i>j\).
    The type environment \(\ATE\) is \(\Inp1\COL\RARR, \Inp2\COL\RARR, \Out\COL\WARR, i\COL\INT,j\COL\INT,k\COL\INT,x\COL\INT,
    N\COL\INT\). In this example, the loop invariant \(i+j=k<2N\land i\le N\land j\le N\) is required for the translation.
    In fact, in order to ensure that \(i\in \iset{\Inp1}\) in the first conditional branch, the condition \(i< N\) is required,
    which can be obtained by the entailment
    \(i+j<2N\land j=N \Imp i<  N\).
    Here, \(i+j<2N\) and \(j=N\) respectively come from the loop invariant and the branching condition.
    Thanks to the use of relational Hoare logic, we can correctly reason about such invariants.
    \qed
\end{example}

\begin{figure}
  \begin{tikzpicture}
    \fill[teal!5,rounded corners=5pt] (-6.3, -7.6) rectangle (-1.1, 7.7);
    \fill[orange!5,rounded corners=5pt] (0.5, -7.6) rectangle (5.7, 7.7);
    \node at (0, 0) [align=center] {
      \begin{minipage}{\textwidth}
        \begin{flalign*}
          \btes{//  $\iset{\Inp1}=\iset{\Inp2}=0\Dots(N-1)\land \iset{\Out}=\emptyseq\land i=j=0$}\\
          & \forexp{k}{0}{N*2}{1}{}\{ && \forexp{k}{0}{N*2}{1}{}\{ \hspace{1.5em}  \\
          \btes{//  $\iset{\Inp1}=i\Dots(N-1)\land \iset{\Inp2}=j\Dots(N-1)\land \iset{\Out}=0\Dots(k-1)\land i+j=k<2N$}\\
          && \bte{ \hspace{7em} $ {} \land i\le N\land j\le N$}\\
          &\ \ \IF\ j=N\ \{ &&\ \ \IF\ j=N\ \{\\
          \btes{//  $\iset{\Inp1}=i\Dots(N-1)\land \iset{\Inp2}=\emptyseq\land \iset{\Out}=0\Dots(k-1)\land i+N=k<2N\land i< N\land j=N$}\\
          &\quad x:=\Inp1[i]; \Out[k]:= x; &&\quad x := \RStr{\Inp1}; \WStr{\Out}{x};\\
          & \quad i:=i+1 && \quad i:=i+1\\
          \btes{//  $\iset{\Inp1}=i\Dots(N-1)\land \iset{\Inp2}=\emptyseq\land \iset{\Out}=0\Dots k\land i-1+N=k<2N\land i\le N\land j=N$}\\
          & \ \ \}\ \ELSE\ \IF\ i=N\ \{&&\ \ \}\ \ELSE\ \IF\ i=N\ \{\\
          \btes{//  $\iset{\Inp1}=\emptyseq\land \iset{\Inp2}=j\Dots(N-1)\land \iset{\Out}=0\Dots(k-1)\land N+j=k<2N\land i= N\land j < N$}\\
          &\quad x:=\Inp2[i]; \Out[k]:= x; &&\quad x := \RStr{\Inp2}; \WStr{\Out}{x};\\
          & \quad j:=j+1 &&\quad j:=j+1\\
          \btes{//  $\iset{\Inp1}=\emptyseq\land \iset{\Inp2}=j\Dots(N-1)\land \iset{\Out}=0\Dots k\land N+j-1=k<2N\land i= N\land j\le N$}\\
          &\ \  \}\ \ELSE\ \IF\ \Inp1[i]<\Inp2[j]\ \{ &&\ \  \}\ \ELSE\ \IF\ \Inp1[i]<\Inp2[j]\ \{ \\
          \btes{//  $\iset{\Inp1}=i\Dots(N-1)\land \iset{\Inp2}=j\Dots(N-1)\land \iset{\Out}=0\Dots(k-1)\land i+j=k<2N$}\\
          && \bte{ \hspace{7em} $ {} \land i <  N\land j < N$}\\
          &\quad x:=\Inp1[i]; \Out[k]:= x; &&\quad x := \RStr{\Inp1}; \WStr{\Out}{x}; \\
          &\quad i:=i+1 &&\quad i:=i+1\\
          \btes{//  $\iset{\Inp1}=i\Dots(N-1)\land \iset{\Inp2}=j\Dots(N-1)\land \iset{\Out}=0\Dots k\land i-1+j=k<2N$}\\
          && \bte{ \hspace{7em} $ {} \land i\le N\land j < N$}\\
          &\ \  \}\ \ELSE\ \{&&\ \ \}\ \ELSE\ \{\\
          \btes{//  $\iset{\Inp1}=i\Dots(N-1)\land \iset{\Inp2}=j\Dots(N-1)\land \iset{\Out}=0\Dots(k-1)\land i+j=k<2N$}\\
          && \bte{ \hspace{7em} $ {} \land i\le N\land j\le N$}\\
          &\quad x:=\Inp2[i]; \Out[k]:= x; &&\quad x := \RStr{\Inp2}; \WStr{\Out}{x};\\
          &\quad j:=j+1 &&\quad j:=j+1\\
          \btes{//  $\iset{\Inp1}=i\Dots(N-1)\land \iset{\Inp2}=j\Dots(N-1)\land \iset{\Out}=0\Dots k\land i+j-1=k<2N$}\\
          && \bte{ \hspace{7em} $ {} \land i < N\land j \le N$}\\
          &\ \  \} &&\ \  \} \\
          \btes{//  $\iset{\Inp1}=i\Dots(N-1)\land \iset{\Inp2}=j\Dots(N-1)\land \iset{\Out}=0\Dots k\land i+j=k+1\le 2N$}\\
          && \bte{ \hspace{7em} $ {} \land i\le N\land j \le N$}\\
          & \} && \} \\
          \btes{//  $\iset{\Inp1}=\emptyseq\land \iset{\Inp2}=\emptyseq\land \iset{\Out}=0\Dots (2N)\land k=2N\land i= j=N$}\\
        \end{flalign*}
      \end{minipage}
    };
  \end{tikzpicture}
\caption{Translation for a merge program.}
\label{fig:ex-stream-merge}
\end{figure}

\subsection{Correctness of the Translation}
To show that the semantics of the program is preserved by the translation,
we define a relation \(\relRH{R}{H}{R'}{\Strm'}{\ATE}{\BTE}\) between
the runtime states of programs before and after the translation as follows.
\infrule{R(x)=R'(x)\mbox{ for each $x\COL\INT\in \ATE$}\\
  \Iset{H}{a} \supseteq \setof{\Iseq(a)}
  \mbox{ for each $a$ such that $\ATE(a)\in\set{\RARR,\WARR}$}\\
  H(a,\Iseq(a)[i]) = \Strm'(a)[i]
  \mbox{ for each $a$ and $i$ such that} \hspace{2cm} \\ 
  \hspace{4cm}\mbox{$\ATE(a)\in\set{\RARR,\WARR}$, $0\le i < \length{\Iseq(a)}$}\\
  R',H, \Iseq\models \BTE
}
{\relRH{R}{H}{R'}{\Strm'}{\ATE}{\BTE}}
Here, \(\Iseq(a)\) denotes a sequence of indices \sk{Should we emphasize that the elements of the sequence are pairwise distinct?} of array \(a\) and \( \Iseq(a)[i]\) denotes the \( i \)-th element of the sequence; the sequence describes
how the values of an array \(a\) of the source program are stored in the stream \(a\) of
the target program.
For example, if \(\Strm'(a)=2\cdot 3\) and \(\Iseq(a)=1\cdot 0\),
then \(H(a,1)=2\) and \(H(a,0)=3\).
To avoid undefined accesses to the heap, we require that the elements of \( \Iseq(a)\) are included in \( \Iset H a \defeq \{ m \mid (a, m) \in \dom(H)\} \).
The last premise \(R',H, \Iseq\models \BTE\) means that
\(\BTE\) holds under the interpretation which maps (i) each integer and buffer variable \(x\) to
\(R'(x)\), (ii) each array \(a\) to \(\set{m\mapsto H(a,m)}\), and
(iii) each index sequence variable \(\iset{a}\) to \(\Iseq(a)\).
 Note here that array variables are interpreted by using the state of the program
\emph{before} the translation, while integer and buffer variables and \( \iset a \) are interpreted by using the state of the
program \emph{after} the translation.

The following theorem on the correctness of the translation states
that the evaluation of source and target programs can be simulated by each other.
\begin{restatable}[Correctness of translation]{theorem}{correctnessStream}
  \label{lem:str-correctness}
  Suppose \(\sj{\ATE}{\BTE}{s}{\BTE'}{t}\)
  and \(\relRH{R}{H}{R'}{\Strm'}{\ATE}{\BTE}\).
  \begin{enumerate}
  \item If \(\sconfig{R}{H}{s}\eval \sstate{R_1}{H_1}\), then
    there exist \(R_1'\) and \(\Strm_1'\) such that
    \(\sconfig{R'}{\Strm'}{\iexp}\eval \sstate{R'_1}{\Strm'_1}\) and
    \(\relRH{R_1}{H_1}{R'_1}{\Strm'_1}{\ATE}{\BTE'}\).
  \item If \(\sconfig{R'}{\Strm'}{t}\eval \sstate{R'_1}{\Strm'_1}\), then
    there exist \(R_1\) and \(H_1\) such that
    \(\sconfig{R}{H}{s}\eval \sstate{R_1}{H_1}\) and
    \(\relRH{R_1}{H_1}{R'_1}{\Strm'_1}{\ATE}{\BTE'}\).
  \end{enumerate}
\end{restatable}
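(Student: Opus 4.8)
The plan is to prove both statements simultaneously by induction on the derivation of the translation judgment \(\sj{\ATE}{\BTE}{s}{\BTE'}{t}\), strengthening the claim to cover the auxiliary insertion judgment \(\bjs{\ATE}{\BTE}{\BTE'}{t}\) as well. For the auxiliary judgment the corresponding claim is that of the theorem except that the source takes no step: whenever \(\relRH{R}{H}{R'}{\Strm'}{\ATE}{\BTE}\) holds and \(\sconfig{R'}{\Strm'}{t}\eval\sstate{R'_1}{\Strm'_1}\), we get \(\relRH{R}{H}{R'_1}{\Strm'_1}{\ATE}{\BTE'}\), and conversely any such target step can be taken from a state related by \(\BTE\). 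Handling both judgments by one mutual induction is natural, since \rn{Tr-InsertL} and \rn{Tr-InsertR} interleave insertions with ordinary translation steps.

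Before the induction I would isolate a substitution lemma for the assertion semantics \(R',H,\Iseq\models\BTE\): applying a substitution \(\theta\) to \(\BTE\) corresponds to the matching update of the interpretation, namely replacing the value of an integer or buffer variable, the value of an array at one index (as in \(\updateAr{a}{e}{x}\)), or the index sequence \(\iset a\) (as in the substitutions built from \(\hd\), \(\tl\), and concatenation). This single lemma discharges the assertion-level bookkeeping in \rn{Tr-Assign}, \rn{Tr-ReadMem}, \rn{Tr-WriteMem}, \rn{Tr-InsRBuf}, \rn{Tr-InsWBuf}, and \rn{Tr-InsMove}. I would also carry, as part of the relational invariant, that the indices in \(\Iseq(a)\) are pairwise distinct and contained in \(\Iset H a\); this is what makes the correspondence \(H(a,\Iseq(a)[i])=\Strm'(a)[i]\) well-defined, and it is exactly what the freshness side conditions \(e\notin\iset a\) and \(n\notin\iset a\) in the write rules need, so that updating the heap at a fresh index, or appending a fresh index to \(\Iseq(a)\), leaves the correspondence intact.

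The leaf and structural rules are then routine. For \rn{Tr-Assign} both programs run \(x:=e\); since the integer variables of \(R\) and \(R'\) agree, \(e\) evaluates identically, and the substitution lemma relates \([e/x]\BTE\) to \(\BTE\) after the update. For \rn{Tr-ReadMem} the precondition \(b=a[e]\) guarantees that the buffer already holds \(H(a,\llbracket e\rrbracket)\), so \(x:=b\) in the target reproduces the value that \(x:=a[e]\) assigns in the source. For \rn{Tr-WriteMem} the source heap update at the fresh index \(\llbracket e\rrbracket\notin\Iseq(a)\) disturbs no tracked position, and the parallel substitution \([x/b,\updateAr{a}{e}{x}/a]\BTE\) mirrors the simultaneous update of \(b\) in the target and of \(H\) in the source. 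The insertion rules are matched against a source no-op: \rn{Tr-InsRBuf} pops the stream head, shortening both \(\Strm'(a)\) and \(\Iseq(a)\), with its precondition recording the popped value in \(b\); \rn{Tr-InsWBuf} appends \(R'(b)\) to \(\Strm'(a)\) and \(n\) to \(\Iseq(a)\), where \(a[n]=b\) guarantees the new tail element still satisfies the correspondence; \rn{Tr-InsMove} is a plain buffer assignment. \rn{Tr-Seq} and \rn{Tr-If} compose the induction hypotheses, using \(R(x)=R'(x)\) so that both programs take the same branch, and \rn{Tr-Conseq}, \rn{Tr-InsConseq}, \rn{Tr-Kernel}, \rn{Tr-InsertL}, and \rn{Tr-InsertR} merely thread the hypotheses; in particular the flipping in \rn{Tr-Kernel} leaves the relation unchanged, since it treats \(\RARR\) and \(\WARR\) uniformly.

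The main obstacle is the loop rule \rn{Tr-For}. Because the semantics is big-step, one pass of the outer induction does not suffice: the evaluation of \(\forexp{x}{e}{m}{n}{s}\) unfolds an a priori unbounded number of times through \rn{R-ForLoop}. I would therefore nest an induction on the derivation of the loop's evaluation (equivalently, on the iteration count), using the loop invariant \(\BTE\) supplied by the rule. The translation premise, via the outer induction hypothesis, shows that one pass of the body preserves the relation and sends \(\BTE\land x\ne m\) to \([x+n/x]\BTE\); the substitution lemma reconciles \([x+n/x]\BTE\) with \(\BTE\) under the incremented counter, and the exit case \rn{R-ForExit} supplies \(\BTE\land x=m\). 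Fitting the two inductions together, and checking that each iteration's use of the relational invariant stays consistent with the index sequences that are shortened by reads and extended by writes within the body, is where most of the care lies. The backward direction (part~2) follows the same case analysis but is driven by the target evaluation; the insertion rules are the only place the two directions genuinely differ, since there we must show that the source takes no step while the target performs a stream operation.
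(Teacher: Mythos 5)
Your proposal is correct, but it organizes the induction differently from the paper. The paper proves part~(1) by a single induction on the derivation of the source evaluation \(\sconfig{R}{H}{s}\eval\sstate{R_1}{H_1}\) (and part~(2) on the target evaluation), inverting the translation judgment in each case; because the last translation rule could be \rn{Tr-InsertL}, \rn{Tr-InsertR}, or \rn{Tr-Conseq}, it first argues these can be assumed away without loss of generality, discharging the insertion rules by a separately proved lemma for the auxiliary judgment \(\bjs{\ATE}{\BTE}{\BTE'}{t}\) (itself proved by induction on that judgment's derivation). In that setup \rn{Tr-For} needs no nested induction: the unfolded continuation \(\forexp{x}{x+n}{m}{n}{s}\) has a strictly smaller evaluation derivation, and its translation judgment is simply re-derived by \rn{Tr-For} from the body premise, so the evaluation-level induction hypothesis applies directly. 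You instead induct on the translation derivation, folding the auxiliary judgment into a mutual induction---which makes the paper's normalization step unnecessary, since the insertion and consequence rules become ordinary cases---but you then pay at \rn{Tr-For}, where, as you correctly identify, a nested induction on the loop's evaluation (iteration count) is required, with the substitution lemma reconciling \([x+n/x]\BTE\) with the invariant; this is the classical treatment of the while rule in Hoare-logic soundness proofs. The supporting ingredients coincide: your substitution lemma matches the paper's lemmas on substitution for register, array, and index-sequence updates; your observation that related states evaluate \(\INT\)-typed expressions identically is the paper's auxiliary lemma; and your freshness analysis (\(e\notin\iset{a}\) and \(n\notin\iset{a}\) keeping \(H(a,\Iseq(a)[i])=\Strm'(a)[i]\) intact, together with distinctness of \(\Iseq(a)\)) is precisely the paper's key bookkeeping in \rn{Tr-WriteMem} and \rn{Tr-InsWBuf}. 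A minor dividend of your route: in the backward direction the paper must analyze which translation rule could have produced a target assignment (a target \(x:=e\) may stem from \rn{Tr-Assign}, \rn{Tr-ReadMem}, \rn{Tr-WriteMem}, or \rn{Tr-InsMove}), whereas under your induction this disambiguation is automatic, each rule being its own case.
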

\begin{proof}
  \CameraReady{
    Given in the full version~\cite{tanaka2026relationalhoarelogichighlevel}.
  }{
    Given in Appendix~\ref{appx:correctness-stream}.
  }  
  \qed
\end{proof}
Since \(\relRH{R_1}{H_1}{R'_1}{\Strm'_1}{\ATE}{\BTE'}\) requires that \( R_1(x) = R'_1(x) \) for each integer variable \( x \), the above theorem says that the results of the programs before and after the translation are the same.
Furthermore, the mutual simulation ensures that the translated program never gets stuck nor diverges if and only if the source program does not.

\section{Automation of Translation}
\label{sec:auto}

Below we describe how our translation is automated; experimental results are reported
in Section~\ref{sec:experiment}.

\subsection{Restriction on Pre-conditions and Post-conditions}
\label{sec:auto-restriction}
For the purpose of automating the translation, we have restricted the shape of \(\BTE\)
to the form
\[
  \iset{a_1}=[\mathit{low}_1,\mathit{high}_1; s_1]\land \cdots \land
  \iset{a_k}=[\mathit{low}_k,\mathit{high}_k; s_k]\land \varphi,
\]
where \([\mathit{low}_i,\mathit{high}_i; s_i]\) denotes the sequence of integers from
\(\mathit{low}_i\) to \(\mathit{high}_i\) with gap \(s_i\) 
(e.g., \([0,6;2]\) denotes the sequence \(0\cdot 2\cdot 4\cdot 6\)),
and \(\varphi\) consists of only (i) buffer information \(b=a[e]\) and (ii) conditions on loop variables 
(that can automatically be extracted from for-loops; this is the reason why we preferred for-loops over while-loops).
The bounds \(\mathit{low}_i\) and \(\mathit{high}_i\) can be linear expressions on free variables
(whose coefficients are automatically inferred as described in Section~\ref{sec:auto-vcg}).

\Revision{
\paragraph{Limitations}
As described above, our current implementation imposes restrictions on index access patterns, although our formalization does not have such restrictions.
At present, the tool requires that, across loop iterations, the range of accessed indices
shifts by a constant, as in sliding-window or forward/backward scans.
Consequently, it does not support unconventional index orders such as column-major access
(e.g., \texttt{A[j][i]} instead of the usual \texttt{A[i][j]}) in matrix computations.
In future work, we plan to relax this restriction by fully exploiting the power of the relational Hoare logic, 
possibly allowing users to specify loop invariants.}

\subsection{Buffer Insertion}
Our tool first inserts buffering commands as follows.
For each loop and each array accessed in the loop, the tool collects the set of indices (as symbolic expressions)
accessed during each iteration. For example, in the source program of Figure~\ref{fig:ex-stream},
the sets of indices for arrays \(\In\) and \(\Out\) are \(\set{x,x+1}\) and \(\set{x}\) respectively.
The tool orders the indices symbolically, and decides that all indices except the largest one should be read from
buffers. On line 13 of Figure~\ref{fig:ex-stream}, the read from \(\In[x]\) is thus replaced
with a read from buffer \(b_0\), and on line 14, the value read from the corresponding stream is stored in the buffer
for use in the next iteration. In contrast, buffering is unnecessary for \(\Out\) since only a single index \(x\) of \(\Out\)
is accessed. (As already explained,
the replacement of \(\Out[x]:=z_1\) with \(b_1:=z_1; \Out.write(b_1)\) is for the sake of simplicity of
translation rules. In the actual implementation, this is replaced directly with \(\Out.write(z_1)\).)
We also note that, for two-dimensional arrays,
the tool introduces line buffers and applies the same reasoning as above for the first index of each two-dimensional array.

\subsection{Verification Condition Generation}
\label{sec:auto-vcg}
Once the buffers are inserted, to check whether a program can be translated, we can generate verification conditions and check if they are valid.

The verification condition generation algorithm typically requires annotations for loop invariants, but in our case, loop invariants are automatically inferred.
This is made possible by the restriction explained in Section~\ref{sec:auto-restriction}.
For each \(\mathit{low}_i\), \(\mathit{high}_i\) and \(s_i\) introduced in Section~\ref{sec:auto-restriction},
the tool prepares a linear template
\(c_0+c_1x_1+\cdots + c_\ell x_\ell\), where \(x_1,\ldots,x_\ell\) are integer variables occurring in the given
program\footnote{In the current implementation, we further restrict \(x_1,\ldots,x_\ell\) to loop variables,
but that restriction is easy to remove.}
and \(c_0,\ldots,c_\ell\) are unknown constants.
For example, for the second loop of Figure~\ref{fig:ex-stream}, the tool prepares the following template
for loop invariant, which we denote \( \Inv \):\footnote{For simplicity, here
\(N\) is treated as a constant not depending on \(b\).}
\begin{align*}
&\iset{\In} = [c_{0,0}+c_{0,1}x, c_{1,0}+c_{1,1}x; c_{2,0}+c_{2,1}x] \\
&\land \iset{\Out} = [c_{3,0}+c_{3,1}x, c_{4,0}+c_{4,1}x; c_{5,0}+c_{5,1}x] \land x \ne N-1
\end{align*}

We briefly explain how the verification conditions are generated.
As in the standard Hoare Logic, the verification condition is calculated using the approximate weakest precondition \( \awp s \BTE \), which is defined based on the translation rules in Section~\ref{sec:buffer}.
For example, the approximate weakest precondition for a write to an array is defined by
 \[
 \awp{a[e] := x} \BTE \defeq  e \notin \iset a  \land [a \set{e \mapsto x}/ a, \iset a \cdot e / \iset a] \BTE.
 \]
 Note that this is exactly the precondition of \rn{Tr-WriteMem}. In the end of the second loop of Figure~\ref{fig:ex-stream}, the (template) invariant \( [x + 1/x]\Inv \) must hold, and thus, the condition \( \awp{\mathit{out[x]:=z_1}}{[x + 1/x]\Inv} \equiv x \notin \iset \Out \land [\iset \Out \cdot x/ \iset \Out, x + 1/x]\Inv\) must hold before the statement \( out[x]:=z_1 \).
Thanks to the shape of the template, \( x \notin \iset \Out \) and \( \iset \Out \cdot x \) can be represented as formulas in the theory of (quantified) integer arithmetic.
For instance, \( x \notin \iset \Out \) is represented as
 \[
     \lnot \exists d (\low_\Out + d \times s_\Out = x \land \low_\Out \le x \le \high_\Out)
 \]
By chaining this backward reasoning, the actual verification condition for the loop is calculated as  \( Inv \land x < N \implies \awp {s} {[x + 1/x] \Inv} \), where \( s \) is the body of the loop.
This condition checks whether \( \Inv \) is a genuine loop invariant.\footnote{The actual verification condition also contains the entailment among pre-/post-condition and the invariant.}
For the example above,
\(c_{0,0}=c_{0,1}=c_{2,0}=c_{4,1}=c_{5,0}=1, c_{1,0}=N-1, c_{1,1}=c_{2,1}=c_{3,0}=c_{3,1}=c_{5,1}=0, c_{4,0}=-1\)
yields the loop invariant as shown on line 12 of Figure~\ref{fig:ex-stream} and makes the verification condition valid.
Then the tool transforms the program according to the translation rules and obtains the program in the right-hand-side of Figure~\ref{fig:ex-stream}.
If the verification condition is unsatisfiable, the tool gives up replacing (some of) the arrays with streams,
and keeps the arrays as they are.

\section{Experiment}
\label{sec:experiment}

\subsection{Implementation}

We have implemented a prototype tool based on the formalization of our translation presented in
Section~\ref{sec:buffer}.
As described in Section~\ref{sec:auto}, 
we have imposed certain restrictions to enable automated translation in the implementation.
To assess the individual contributions of buffer insertion and stream replacement to performance improvement, 
our implementation allows the stream replacement phase to be optionally skipped.
In what follows, we refer to the translation that performs only buffer insertion as \emph{buffer translation}, 
and the subsequent stream replacement step as \emph{stream translation}.
We have designed a C-like domain-specific language (DSL) as the source language of our tool, 
with \CC{} and hardware optimization directives as the target language.
The DSL supports (non-recursive) functions and two-dimensional arrays in addition to the primitives of the source language
presented in Section~\ref{sec:lang}.

\subsection{Setup}

To evaluate our proposal, we conduct computing performance measurements on a real FPGA platform using our prototype tool.
We use Z3~\cite{DBLP:conf/tacas/MouraB08} version 4.11.2 as the backend SMT solver to discharge verification conditions.
We use AMD Vitis HLS 2024.1 as the backend compiler of our tool to generate register transfer level (RTL) code from translated \CC{} source code with directives,
and we use AMD Vivado 2024.1 to generate an FPGA bitstream of the target hardware accelerator.
We use AMD Kria KV260 Vision AI Starter Kit~\cite{k26_kv260_amd}, an SoC FPGA consisting of a programmable logic (PL) part (i.e., reconfigurable logic) and processing system (PS) part (i.e., CPU cores) as the target FPGA platform.
Without the stream translation, off-chip memory accesses are conducted via the AXI Main (formerly Master) interface of the generated accelerator.
With the stream translation, the AXI Main interfaces are replaced with AXI Stream interfaces, if possible, and the AXI DMA controller transfers the input and output data to the accelerator in a burst manner.

We use PYNQ~\cite{kria_pynq_github}, a Python-based infrastructure providing software libraries to control hardware accelerators on the programmable logic from the integrated CPU.
The host software code, which consists of a non-kernel part of the translated program, including DMA requests, is executed on the CPU.

We use the following base benchmark programs for the performance measurements:

\begin{itemize}
    \item \textbf{Filter} : The motivating example shown in Figure~\ref{fig:intro1} in Section~\ref{sec:intro}.
    \item \textbf{Filter-Dilated} : A variant of \textbf{Filter} that computes the average of elements at distant indices.
    \item \textbf{Divide} : The motivating example shown in Figure~\ref{fig:intro-pitfall} in Section~\ref{sec:intro}.
    \item \textbf{MatVecMul} : A program that performs matrix-vector multiplication.
    \item \textbf{MatAdd} : A program that performs addition of two matrices.
    \item \textbf{Filter-2D} : A version of \textbf{Filter} that processes a two-dimensional array.
    \item \textbf{Stencil-2D} : A structured grid computation applying a fixed-size filter over a 2D array.
    \item \textbf{KMP} : A classic string-matching algorithm.
    \item \textbf{GeMM} : Multiplication of two dense matrices.
    \item \textbf{SpMV} : Multiplication of a sparse matrix and a dense vector.
\end{itemize}

\noindent
To evaluate the applicability of our framework to realistic workloads, 
we include several programs 
\footnote{\Revision{
We used Stencil-2D, KMP, GeMM, and SpMV from MachSuite. 
The remaining MachSuite benchmarks were not included because they fall outside the scope of our transformation. 
For example, programs such as AES perform only small, fixed memory lookups without reuse, 
so the translated program is identical to the original and provides no additional insight.
}}
from MachSuite~\cite{reagen2014machsuite}, a widely used benchmark suite for accelerator research.
\footnote{
\Revision{
We also attempted to evaluate existing source-to-source HLS frameworks,
such as Stream-HLS~\cite{10.1145/3706628.3708878}, as comparison targets.
However, the framework targets different input models (e.g., Torch-MLIR or
restricted affine programs) and could not be applied to our C/C++ benchmarks
in a fully automatic manner.
}}
Note that the MachSuite programs are originally written in \CC{}, so we manually simplified them to fit our DSL.
For each benchmark, we set the maximum input length to at least \(2^{18}\).

In addition, to show the capability of the proposal for non-trivial memory access patterns, we consider two variants for each base benchmark program, as described below.

\begin{itemize}
    \item \textbf{Rev} : A version that reads the input array in reverse order.
    \item \textbf{Skip} : A version in which the loop index is incremented by 2.
\end{itemize}
\Revision{
\noindent
These variants represent non-trivial memory access patterns that can arise in practical applications.
For instance, reverse traversal appears in backward scans (e.g., in dynamic programming or image processing), 
and skip-like patterns occur in signal and image processing tasks such as subsampling, decimation, or downsampling.
}

\begin{figure*}
\centering
\subfloat[]{
\begin{tikzpicture}
\begin{axis}[
    ybar,
    bar width=3.5pt,
    axis lines=left,
    axis line style={-},
    width=\linewidth,
    height=6cm,
    ylabel={Execution Time [ms]},
    symbolic x coords={
      Filter, Filter-Rev, Filter-Skip,
      Filter-Dilated, Filter-Dilated-Rev, Filter-Dilated-Skip,
      Divide, Divide-Rev, Divide-Skip,
      MatVecMul, MatVecMul-Rev, MatVecMul-Skip,
      MatAdd, MatAdd-Rev, MatAdd-Skip
    },
    xtick=data,
    x tick label style={rotate=90, anchor=east, font=\footnotesize},
    enlarge x limits=0.04,
    ymin=0,
    legend style={at={(0.95,0.95)}, anchor=north east, legend columns=-1},
    legend cell align={left},
    every node near coord/.append style={font=\tiny, rotate=90, anchor=west},
    nodes near coords,
    ybar,aan ybar legend
]
\addplot+[blue,fill=blue!30] coordinates {
  (Filter,384) (Filter-Rev,551) (Filter-Skip,194)
  (Filter-Dilated,422) (Filter-Dilated-Rev,551) (Filter-Dilated-Skip,421)
  (Divide,35.6) (Divide-Rev,188) (Divide-Skip,71.8)
  (MatVecMul,29.2) (MatVecMul-Rev,157) (MatVecMul-Skip,79.2)
  (MatAdd,35.7) (MatAdd-Rev,201) (MatAdd-Skip,97.3)
};
\addplot+[red,fill=red!30] coordinates {
  (Filter,35.6) (Filter-Rev,188) (Filter-Skip,28.6)
  (Filter-Dilated,35.6) (Filter-Dilated-Rev,188) (Filter-Dilated-Skip,35.6)
  (Divide,35.6) (Divide-Rev,188) (Divide-Skip,71.8)
  (MatVecMul,29.2) (MatVecMul-Rev,157) (MatVecMul-Skip,79.5)
  (MatAdd,35.7) (MatAdd-Rev,201) (MatAdd-Skip,97.3)
};
\addplot+[green!70!black,fill=green!30] coordinates {
  (Filter,30.5) (Filter-Rev,30.8) (Filter-Skip,30.0)
  (Filter-Dilated,30.5) (Filter-Dilated-Rev,30.8) (Filter-Dilated-Skip,30.6)
  (Divide,30.6) (Divide-Rev,30.3) (Divide-Skip,30.5)
  (MatVecMul,31.1) (MatVecMul-Rev,144) (MatVecMul-Skip,144)
  (MatAdd,31.0) (MatAdd-Rev,30.7) (MatAdd-Skip,30.7)
};
\legend{Src, Buf, Str}
\end{axis}
\end{tikzpicture}
}
\hfill
\subfloat[]{
\begin{tikzpicture}
\begin{axis}[
    ybar,
    bar width=3.5pt,
    axis lines=left,
    axis line style={-},
    width=\linewidth,
    height=6cm,
    ylabel={Execution Time [ms]},
    symbolic x coords={
      Filter-2D, Filter-2D-Rev, Filter-2D-Skip,
      Stencil-2D, Stencil-2D-Rev, Stencil-2D-Skip,
      KMP, KMP-Rev, KMP-Skip,
      GeMM, GeMM-Rev, GeMM-Skip,
      SpMV, SpMV-Rev, SpMV-Skip
    },
    xtick=data,
    x tick label style={rotate=90, anchor=east, font=\footnotesize},
    enlarge x limits=0.04,
    ymin=0,
    legend style={at={(0.05,0.95)}, anchor=north west, legend columns=-1},
    legend cell align={left},
    every node near coord/.append style={font=\tiny, rotate=90, anchor=west},
    nodes near coords,
    ybar,aan ybar legend
]
\addplot+[blue,fill=blue!30] coordinates {  
  (Filter-2D,1150) (Filter-2D-Rev,1200) (Filter-2D-Skip,580)
  (Stencil-2D,1150) (Stencil-2D-Rev,1210) (Stencil-2D-Skip,579)
  (KMP,1950) (KMP-Rev,1960) (KMP-Skip,1270)
  (GeMM,5000) (GeMM-Rev,5250) (GeMM-Skip,2640)
  (SpMV,1260) (SpMV-Rev,1600) (SpMV-Skip,797)
};
\addplot+[red,fill=red!30] coordinates {
  (Filter-2D,36.6) (Filter-2D-Rev,189) (Filter-2D-Skip,72.3)
  (Stencil-2D,36.6) (Stencil-2D-Rev,324) (Stencil-2D-Skip,46.8)
  (KMP,1950) (KMP-Rev,1960) (KMP-Skip,1280)
  (GeMM,5000) (GeMM-Rev,5010) (GeMM-Skip,2500)
  (SpMV,1260) (SpMV-Rev,1600) (SpMV-Skip,809)
};
\addplot+[green!70!black,fill=green!30] coordinates {
  (Filter-2D,31.1) (Filter-2D-Rev,31.2) (Filter-2D-Skip,73.2)
  (Stencil-2D,31.1) (Stencil-2D-Rev,58.9) (Stencil-2D-Skip,46.1)
  (KMP,1870) (KMP-Rev,1880) (KMP-Skip,1250)
  (GeMM,5000) (GeMM-Rev,5030) (GeMM-Skip,2530)
  (SpMV,1260) (SpMV-Rev,1260) (SpMV-Skip,635)
};
\legend{Src, Buf, Str}
\end{axis}
\end{tikzpicture}
}
\caption{Execution time of the kernel functions. 
    \emph{Src} indicates the execution time in the original source program,
    \emph{Buf} shows the time after applying buffer translation only, 
    and \emph{Str} shows the time after completing stream translation.
    Benchmarks are split across (a) and (b) for layout reasons.
}
\label{fig:experiment-bar}
\end{figure*}


\subsection{Evaluation and Discussion}

Figure~\ref{fig:experiment-bar} shows the execution times of the kernel functions of the benchmark programs.
Note that translation time of each program by our tool was less than a second.

First, we begin the discussion for the base benchmarks.
The programs \textbf{Filter}, \textbf{Filter-Dilated}, \textbf{Filter-2D}, 
and \textbf{Stencil-2D} exhibited significant performance improvements thanks to the buffer translation.
The performance improvements can be attributed to two main factors:
(1) a reduction in the number of costly off-chip memory accesses, and
(2) the simplification of memory access patterns, which enabled Vitis HLS to infer burst transfers and obtain the required data in bulk from the off-chip memory.

A burst transfer delivers multiple words of contiguous addresses on the off-chip memory for each transaction, unlike a single transfer, which delivers only one word.
Utilizing burst transfers is crucial for efficient memory bandwidth utilization and effective computing performance.
The backend high-level synthesis compiler (Vitis HLS in this experiment) can infer burst transfers through identifications of sequential memory accesses in a program, and multiple words are transferred at high throughput between on-chip fabrics and the off-chip memory.
However, original access patterns to the off-chip memory in \textbf{Filter}, \textbf{Filter-Dilated}, \textbf{Filter-2D},
and \textbf{Stencil-2D} are not in sequential order, and no burst transfer can be synthesized.

In our approach, the buffer translation inserts on-chip reuse buffers using block RAMs in the FPGA for marshaling access orders to the off-chip memory.
The first memory access to a word is still represented as off-chip memory access, but the obtained word is stored on the inserted on-chip reuse buffer.
Then, subsequent memory accesses are replaced with on-chip reuse buffer accesses, and the corresponding off-chip memory accesses are eliminated.
Finally, the remaining off-chip memory accesses become in sequential order, so the backend high-level synthesis compiler can successfully synthesize burst transfers.

In contrast, the performance of other base benchmarks remained largely unchanged.
For the programs \textbf{Divide}, \textbf{MatVecMul}, \textbf{MatAdd}, \textbf{KMP}, and \textbf{SpMV},
the original memory access patterns are already in sequential order.
Consequently, the buffer translation does not insert any on-chip reuse buffers for these programs.
As a result, the backend compiler can synthesize efficient burst transfers in the original programs, and no additional performance benefit was observed.
For \textbf{GeMM}, although line buffers are introduced, at least one input matrix is accessed in a non-linear order.
Therefore, the buffer translation does not insert reuse buffers for that matrix 
and the access to it likely constitutes the performance bottleneck.

Next, we consider the variant programs.
The performance of the \textbf{Rev} variants, except for \textbf{MatVecMul}, \textbf{KMP} and \textbf{GeMV}, was improved with the stream translation.
In their original forms, these programs exhibited sequential but reverse memory access patterns, which prevent the backend compiler from inferring burst accesses.
In contrast, the stream translation removed the indexing parts of off-chip memory accesses from the accelerator, and off-chip memory accesses were conducted via the AXI DMA controller instead.
Note that the off-chip data transfer behavior and performance of the AXI DMA controller are the same as those of the burst transfer by the AXI Main interface, and the maximum off-chip bandwidth utilization by the AXI DMA controller is equivalent to that of the burst transfer.
Here, the stream translation changes the access order to the normal sequential order for the AXI DMA controller because the DMA controller does not support the reverse order.
Finally, the translated programs after the stream translation can receive and send the input and output data to/from the off-chip memory in the burst mode at high bandwidth utilization.

Unfortunately, in the case of \textbf{MatVecMul-Rev}, the performance remained unchanged.
The vectors in this program are accessed in non-linear order, making them unsuitable for the stream translation. 
The existence of such non-linear accesses prevents our method from converting the program into the stream style and the backend compiler from inferring efficient burst transfers. 
For \textbf{KMP-Rev} and \textbf{GeMM-Rev}, although the translation enables the backend compiler to infer burst transfers,
other parts of the programs likely became performance bottlenecks, and thus no improvements were observed.

In the cases of \textbf{Skip} variants, the stream translation provides performance improvements for \textbf{Divide-Skip}, \textbf{MatAdd-Skip}, and \textbf{SpMV-Skip}.
After the stream translation, in addition to the essential words in the computation, unused words in the original program are transferred from/to the off-chip memory so that the AXI DMA controller can utilize the burst transfer.
Finally, the data transfer throughput and computing performance were successfully improved while the number of transferred words was increased.

In contrast, the stream translation does not provide sufficient performance improvements for the \textbf{Skip} variants of \textbf{Filter}, \textbf{Filter-Dilated}, \textbf{Filter-2D} and \textbf{Stencil-2D} compared to their \textbf{Rev} counterparts. 
This is because the buffer translation before the stream translation already marshaled the off-chip memory accesses very well, and the off-chip memory access patterns are contiguous, allowing the backend compiler to infer burst transfers.
As a result, there are no additional benefits to the stream translation.

\begin{table*}[t]
  \caption{
    Area and power consumption of the kernel functions, 
    as reported by Vivado after synthesizing the corresponding hardware implementations.
    \Revision{\emph{Src}, \emph{Buf}, and \emph{Str} denote the original source kernel,
    the buffered version, and the stream-replaced version, respectively.}
  }
  \label{table:experiments2}
  \centering
  \resizebox{\textwidth}{!}{
  \begin{tabular}{l|ccc|ccc|ccc|ccc|ccc|ccc}
    \hline
    Name & \multicolumn{3}{c|}{Power [W]} & \multicolumn{3}{c|}{LUT} & \multicolumn{3}{c|}{LUTRAM} & \multicolumn{3}{c|}{FF} & \multicolumn{3}{c|}{BRAM} & \multicolumn{3}{c}{DSP} \\
         & \emph{Src} & \emph{Buf} & \emph{Str} & \emph{Src} & \emph{Buf} & \emph{Str} & \emph{Src} & \emph{Buf} & \emph{Str} & \emph{Src} & \emph{Buf} & \emph{Str} & \emph{Src} & \emph{Buf} & \emph{Str} & \emph{Src} & \emph{Buf} & \emph{Str} \\
    \hline\hline
\textbf{Filter}  & 2.748 & 2.751 & 2.746 & 3003 & 3161 & 4003 & 473 & 485 & 515 & 3575 & 3764 & 5007 & 1 & 1 & 5 & 4 & 4 & 4 \\
\textbf{Filter-Rev} & 2.746 & 2.752 & 2.746 & 3000 & 3139 & 4003 & 467 & 481 & 515 & 3533 & 3697 & 5007 & 1 & 1 & 5 & 4 & 4 & 4 \\
\textbf{Filter-Skip} & 2.748 & 2.750 & 2.746 & 2980 & 3041 & 3951 & 462 & 486 & 515 & 3566 & 3730 & 4940 & 1 & 1 & 5 & 4 & 4 & 4 \\
\hline
\textbf{Filter-Dilated} & 2.748 & 2.752 & 2.746 & 3070 & 3218 & 4012 & 471 & 483 & 515 & 3677 & 3883 & 5137 & 1 & 1 & 5 & 4 & 4 & 4 \\
\textbf{Filter-Dilated-Rev} & 2.749 & 2.752 & 2.746 & 3029 & 3138 & 4012 & 466 & 479 & 515 & 3616 & 3810 & 5137 & 1 & 1 & 5 & 4 & 4 & 4 \\
\textbf{Filter-Dilated-Skip} & 2.748 & 2.752 & 2.746 & 3052 & 3131 & 3984 & 460 & 472 & 515 & 3667 & 3702 & 4974 & 1 & 1 & 5 & 4 & 4 & 4 \\
\hline
\textbf{Divide} & 2.747 & 2.746 & 2.743 & 2792 & 2799 & 3819 & 478 & 480 & 515 & 3416 & 3420 & 4839 & 1 & 1 & 5 & 0 & 0 & 0 \\
\textbf{Divide-Rev} & 2.748 & 2.748 & 2.743 & 2834 & 2843 & 3819 & 476 & 478 & 515 & 3444 & 3448 & 4839 & 1 & 1 & 5 & 0 & 0 & 0 \\
\textbf{Divide-skip} & 2.747 & 2.747 & 2.743 & 2803 & 2815 & 3812 & 461 & 463 & 515 & 3414 & 3418 & 4839 & 1 & 1 & 5 & 0 & 0 & 0 \\
\hline
\textbf{MatVecMul} & 2.758 & 2.758 & 2.756 & 3560 & 3567 & 4546 & 655 & 657 & 677 & 4395 & 4399 & 5837 & 1.5 & 1.5 & 5.5 & 3 & 3 & 3 \\
\textbf{MatVecMul-Rev} & 2.760 & 2.759 & 2.757 & 3636 & 3685 & 4574 & 658 & 664 & 705 & 4452 & 4464 & 5874 & 1.5 & 1.5 & 5.5 & 3 & 3 & 3 \\
\textbf{MatVecMul-Skip} & 2.759 & 2.759 & 2.756 & 3610 & 3625 & 4580 & 651 & 657 & 721 & 4408 & 4420 & 5939 & 1.5 & 1.5 & 5.5 & 3 & 3 & 3 \\
\hline
\textbf{MatAdd} & 2.759 & 2.759 & 2.753 & 3487 & 3496 & 4777 & 650 & 654 & 701 & 4404 & 4412 & 6155 & 1.5 & 1.5 & 7.5 & 0 & 0 & 0 \\
\textbf{MatAdd-Rev} & 2.761 & 2.759 & 2.753 & 3598 & 3646 & 4777 & 648 & 652 & 701 & 4469 & 4477 & 6155 & 1.5 & 1.5 & 7.5 & 0 & 0 & 0 \\
\textbf{MatAdd-Skip} & 2.759 & 2.760 & 2.753 & 3571 & 3584 & 4736 & 650 & 654 & 701 & 4439 & 4447 & 6152 & 1.5 & 1.5 & 7.5 & 0 & 0 & 0 \\
\hline
\textbf{Filter-2D} & 2.752 & 2.818 & 2.787 & 3205 & 5835 & 6869 & 484 & 623 & 733 & 3893 & 5911 & 7182 & 1 & 53 & 57 & 4 & 4 & 4 \\
\textbf{Filter-2D-Rev} & 2.754 & 2.818 & 2.787 & 3377 & 5985 & 6896 & 459 & 659 & 733 & 3922 & 5987 & 7182 & 1 & 53 & 57 & 4 & 4 & 4 \\
\textbf{Filter-2D-Skip} & 2.753 & 2.790 & 2.779 & 3228 & 5752 & 6620 & 467 & 674 & 745 & 3962 & 5970 & 7189 & 1 & 7.5 & 11.5 & 4 & 4 & 4 \\
\hline
\textbf{Stencil-2D} & 2.764 & 2.864 & 2.814 & 4022 & 6545 & 7598 & 645 & 800 & 896 & 5143 & 6868 & 8266 & 1.5 & 65.5 & 69.5 & 3 & 27 & 27 \\
\textbf{Stencil-2D-Rev} & 2.762 & 2.854 & 2.806 & 4119 & 6790 & 7748 & 635 & 844 & 916 & 5200 & 6841 & 8210 & 1.5 & 65.5 & 69.5 & 3 & 27 & 27 \\
\textbf{Stencil-2D-Skip} & 2.760 & 2.853 & 2.819 & 3969 & 6664 & 7711 & 645 & 807 & 903 & 5138 & 6721 & 8141 & 1.5 & 65.5 & 69.5 & 3 & 27 & 27 \\
\hline
\textbf{Kmp} & 2.759 & 2.759 & 2.755 & 3621 & 3632 & 4139 & 601 & 607 & 626 & 4514 & 4526 & 5255 & 1.5 & 1.5 & 3.5 & 0 & 0 & 0 \\
\textbf{Kmp-Rev} & 2.760 & 2.756 & 2.755 & 3750 & 3737 & 4218 & 601 & 607 & 626 & 4524 & 4559 & 5297 & 1.5 & 1.5 & 3.5 & 0 & 0 & 0 \\
\textbf{Kmp-Skip} & 2.756 & 2.757 & 2.754 & 3712 & 3733 & 4216 & 601 & 607 & 626 & 4544 & 4556 & 5295 & 1.5 & 1.5 & 3.5 & 0 & 0 & 0 \\
\hline
\textbf{GeMM} & 2.759 & 2.762 & 2.758 & 3681 & 3630 & 4628 & 655 & 646 & 713 & 4447 & 4526 & 5967 & 1.5 & 2 & 6 & 3 & 3 & 3 \\
\textbf{GeMM-Rev} & 2.761 & 2.762 & 2.760 & 3693 & 3728 & 4663 & 671 & 674 & 716 & 4497 & 4647 & 5973 & 1.5 & 2 & 6 & 3 & 3 & 3 \\
\textbf{GeMM-skip} & 2.759 & 2.761 & 2.758 & 3676 & 3616 & 4643 & 655 & 655 & 716 & 4443 & 4553 & 5980 & 1.5 & 2 & 6 & 3 & 3 & 3 \\
\hline
\textbf{SpMV} & 2.782 & 2.782 & 2.776 & 5147 & 5148 & 5873 & 979 & 983 & 1023 & 6531 & 6539 & 7545 & 2.5 & 2.5 & 6.5 & 3 & 3 & 3\\
\textbf{SpMV-Rev} & 2.783 & 2.783 & 2.777 & 5233 & 5244 & 5876 & 979 & 983 & 1027 & 6531 & 6539 & 7529 & 2.5 & 2.5 & 6.5 & 3 & 3 & 3 \\
\textbf{SpMV-Skip} & 2.783 & 2.783 & 2.777 & 5247 & 5275 & 5911 & 979 & 983 & 1027 & 6530 & 6583 & 7599 & 2.5 & 2.5 & 6.5 & 3 & 3 & 3 \\

    \hline
  \end{tabular}
  }
\end{table*}

Table~\ref{table:experiments2} reports the area and power consumption of the generated hardware implementations.
\Revision{Here, LUT, LUTRAM, FF, BRAM, and DSP denote standard FPGA resources:
look-up tables, LUT-based RAMs, flip-flops, block RAMs, and DSP blocks, respectively.}
It indicates that the use of additional buffers moderately increases area but has little effect on power consumption.
For all benchmark instances, except for \textbf{Filter-2D} and \textbf{Stencil-2D}, the buffer translation does not significantly change the area, and the stream translation adds only a small overhead in BRAM usage, presumably reflecting the cost of FIFO buffers used for stream interfaces.
In \textbf{Filter-2D} and \textbf{Stencil-2D}, the buffer translation leads to a noticeable increase in BRAM usage, likely due to the insertion of line buffers for handling two-dimensional data.

To summarize our evaluation, the buffer translation can change the off-chip memory access patterns to sequential order by using on-chip reuse buffers, and data transfer throughputs are improved thanks to the effective use of burst transfers.
In addition, the stream translation is effective for non-trivial memory access patterns by enforcing burst transfers using the DMA controller and stream processing even with redundant data transfers and computations.

\section{Related Work}
\label{sec:related}

\subsubsection*{Program Transformation for High-Level Synthesis}

Several program transformation approaches for high-level synthesis (HLS) have been proposed, although none are based on relational Hoare logics.
\Revision{
Seto et al.~\cite{seto2018scalar,seto2019scalar} optimize affine C programs for HLS using scalar replacement and the polyhedral model, 
improving performance and reducing hardware area.  
POLSCA~\cite{zhao2022polsca}, ScaleHLS~\cite{9773203}, HIDA~\cite{ye2024hida}, POM~\cite{zhang2024optimizing}, 
and Stream-HLS~\cite{10.1145/3706628.3708878} are MLIR-based transformation frameworks that target affine programs and apply
polyhedral analysis, multi-level design abstractions, or dataflow-oriented restructuring to generate optimized, HLS-friendly code.
These approaches do not provide formal guarantees of semantic preservation for the applied program transformations.}
In contrast, our method differs significantly:
by leveraging relational Hoare logic, it supports more flexible transformations
and enables reasoning about precise invariants required for tricky transformations,
providing formal guarantees of correctness.
Although we considered only array types in this paper, the use of types will allow us to handle non-array data structures and
control structures such as recursion.
In fact, in a different context, Kodama et al.~\cite{kodama2004translation,sato2011ordered}
apply ordered linear types to transform tree-processing programs to stream-processing programs.
We expect that a similar idea can be applied to compile tree-processing HLS programs to efficient hardware
accelerators.

In Section~\ref{sec:auto}, we have restricted array access information (represented by \(\iset{a}\))
to certain shapes; this has been partially inspired by previous work on array access analysis
in the context of compiler optimizations~\cite{paek2002efficient}.

\subsubsection*{DSE-Based Approaches to High-Level Synthesis}

\Revision{
Several design space exploration (DSE)-based approaches~\cite{zhao2017comba,pouget2025automatic,yu2018s2fa,huang2021pylog}}
transform high-level programs into intermediate representations and perform automated exploration of
the design space, such as loop pipelining and array partitioning parameters.
These methods assume hardware-friendly input and focus on tuning synthesis parameters for optimal performance.
In contrast, our method automatically transforms naïve high-level programs into hardware-friendly forms.
It could potentially serve as a preprocessing step for DSE frameworks, broadening their applicability to less optimized input.

\subsubsection*{Formal Approaches to High-Level Synthesis}


Nigam et al.~\cite{nigam2020predictable} propose a type system based on affine types that
restricts programs to hardware-friendly forms, effectively limiting the design space to efficient patterns.
Pouchet et al.~\cite{10.1145/3626202.3637563} develop a program equivalence checker for HLS optimizations, based on abstract interpretation and symbolic analysis.
While these studies assist in writing efficient HLS code, they are not optimizers for naïve programs.
A code that type-checks or an optimized code that is given to the equivalence checker needs to be written by humans or generated by external tools.

\subsubsection*{Relational Hoare Logics}

Relational Hoare logics have been drawing attention recently 
in the context of relational program verification~\cite{DBLP:journals/pacmpl/AvanziniBDG25,DBLP:conf/lics/NagasamudramN21,barthe2009formal}:
see~\cite{DBLP:conf/isola/Naumann20} for a survey.
They have also been used to establish the correctness of program transformations~\cite{10.5555/889016},  
but to our knowledge, have not been applied in the context of high-level synthesis for hardware acceleration.  
The specific design of relational Hoare logic, including index-set variables and other features, is also novel and tailored for automatic translation to HLS programs.

\section{Conclusion}
\label{sec:concl}

We have proposed a formally defined framework for translating naïve programs into efficient ones for high-level synthesis (HLS), 
using relational Hoare logic to ensure robustness to programming style. 
We have implemented a prototype tool that automatically translates C-like programs to HLS \CC{},
and confirmed that the translation improved the hardware performance of certain programs, achieving
up to 38.5x speed-ups.

\subsubsection*{Future Work}
Extending our method to support more complex data such as trees and control structures is an important next step.
We also plan to allow user annotations of loop invariants, to further leverage the expressive power of relational Hoare logic
and enable more flexible transformations.
Beyond these extensions, our RHL-based framework has potential applications beyond buffer insertion and stream processing.
It could help reason about hardware-level constraints---for example, suggesting tiling when the inferred buffer size 
exceeds on-chip memory capacity---or be applied to scheduling to fully utilize FPU pipelines and coordinate 
multiple functional units while preserving data dependencies.
These directions demonstrate the broader applicability of our RHL-based approach.

\subsubsection*{Acknowledgments}
    We would like to thank anonymous reviewers for useful comments.
    This work was supported by JSPS KAKENHI Grant Number JP20H05703, 
    and JST SPRING Grant Number JPMJSP2108.

\bibliographystyle{splncs04}
\bibliography{myref}

\iffull
\clearpage
\appendix
\let\oldinfer\infer
\renewcommand*{\infer}[2]{\oldinfer{#2}{#1}}
\section{Correctness of the Translation}
\label{appx:correctness-stream}
\colorlet{assertioncolor}{magenta}
\newcommand*{\fml}{\textcolor{assertioncolor}{\BTE}}
\newcommand*{\sub}[2]{[#1 / #2]}
\newcommand*{\arTm}{\textcolor{assertioncolor}{\mathtt{a}}}
\newcommand*{\inTm}{\textcolor{assertioncolor}{\mathtt{e}}}
\newcommand*{\idxTm}{\textcolor{assertioncolor}{\mathcal{I}}}
\newcommand*{\sem}[1]{\llbracket #1 \rrbracket}
\newcommand*{\defiff}{\stackrel{\mathrm{def}}{\iff}}
\newcommand*{\relRHS}[7]{\sstate{#1}{#2} \sim_{#5\mid#6}^{#7} \sstate{#3}{#4}}
\newcommand*{\nil}{\mathsf{nil}}
\renewcommand*{\epsilon}{\varepsilon}
\subsection{Assertion language}
\label{appx:correctness-stream:assertion-lang}
We define the syntax and the semantics of the assertion language that was only informally given in the body of the paper.

The grammar of the assertion language as follows:
\begin{align*}
    \text{(array term)} \quad \arTm &\Coloneqq a \mid \updateAr \arTm {\inTm_1} {\inTm_2} \\
  \text{(integer term)} \quad \inTm &\Coloneqq e \mid \arTm[\inTm] \mid \inTm_1 \mathbin{\OP}  \inTm_2  \mid \hd(\idxTm) \\
  \text{(index sequence term)} \quad \idxTm &\Coloneqq \iset a  \mid \nil \mid \inTm \cdot \idxTm  \mid \idxTm \cdot \inTm  \mid \tl(\idxTm) \\
  \text{(formula)} \quad \fml &\Coloneqq  \TRUE \mid \inTm_1 = \inTm_2 \mid \inTm_1 \le \inTm_2 \mid \inTm \in \idxTm  \\
  &\mid \fml_1 \land \fml_2 \mid \neg \fml
\end{align*}
Here \( a \) and \( e \) are array variables and expression that are used in the syntax of source or target language, respectively.
The term \( \iset a \) represents the \emph{sequence} of indices of an array variable \( a \); intuitively it represents the access pattern of the array \( a \).
The constant \( \nil \) represents the empty list; and \( \inTm \cdot \idxTm  \) and \( \idxTm \cdot \inTm \) respectively represent concatenating an integer \( \inTm \) to the head and the tail of the list \( \idxTm \).
As expected, the operations \( \hd \) and \( \tl \) return the head and tail of a sequence.
The intuitive meaning for the rest of the constructors should be clear; see below for their formal interpretation.

\sk{It's strange to mix typewrite fonts and calligraphic fonts. I'll fix this later}

We now define the semantics.
We first give the interpretation of terms.
As expected, array terms are interpreted as a partial map from integers to integers, integer terms are interpreted as integers and index set terms are interpreted as sets of integers.
The semantics depends on three parameters \( R \), \( H \) and \( \Iseq \).
Recall that \( \Iseq \) is a finite map from array variables to integer sequences whose elements are pairwise distinct.
It is used to give the meaning of \( \iset a \).

\begin{align*}
  \sem{a}_{R, H, \Iseq}(m) &\defeq H(a, m) \\
  \sem{\updateAr \arTm {\inTm_1} {\inTm_2}}_{R, H, \Iseq}(m) &\defeq
    \begin{cases}
      \sem{\inTm_2}_{R, H, \Iseq} & \text{if \( \sem{\inTm_1}_{R, H, \Iseq} = m \)} \\
      \sem{\arTm}_{R, H, \Iseq}(m) & \text{otherwise}
    \end{cases} \\
\end{align*}
\begin{align*}
  \sem{e}_{R, H, \Iseq} &\defeq
    \begin{cases}
      n & \text{if \( \econfig R e \eval n \)} \\
      \text{undefined} &\text{otherwise}
    \end{cases}\\
  \sem{\arTm[\inTm]}_{R, H, \Iseq} &\defeq \sem{\arTm}_{R, H, \Iseq}(\sem{\inTm}_{R, H, \Iseq}) \\
  \sem{\inTm_1 \mathbin{\OP} \inTm_2}_{R, H, \Iseq} &\defeq \sem{\OP}(\sem{\inTm_1}_{R, H, \Iseq}, \sem{\inTm_2}_{R, H,\Iseq}) \\
  \sem{\hd(\idxTm)}_{R, H, \Iseq} &\defeq
  \begin{cases}
    \text{undefined} \quad \text{if \( \sem{\idxTm}_{R, H, \Iseq} = \varepsilon\)} \\
    m \quad \text{if \( \sem{\idxTm}_{R, H, \Iseq} = m \cdot \seq n \)}
  \end{cases}
\end{align*}

\begin{align*}
  \sem{\iset a}_{R, H, \Iseq} &\defeq \Iseq(a) \\
  \sem{\nil}_{R, H, \Iseq} &\defeq \epsilon \\
  \sem{\inTm \cdot \idxTm}_{R, H, \Iseq} &\defeq \sem{\inTm}_{R, H, \Iseq} \cdot \sem{\idxTm}_{R, H, \Iseq} \\
 \sem{\idxTm \cdot \inTm}_{R, H, \Iseq}  &\defeq \sem{\idxTm}_{R, H, \Iseq} \cdot \sem{\inTm}_{R, H, \Iseq} \\
 \sem{\tl(\inTm)}_{R, H, \Iseq} &\defeq
 \begin{cases}
   \text{undefined} \quad \text{if \( \sem{\idxTm}_{R, H, \Iseq} = \epsilon \)} \\
   \seq n \quad \text{if \( \sem{\idxTm}_{R, H, \Iseq} = m \cdot \seq n\)}
 \end{cases}
\end{align*}
Note that the interpretation \( \sem{\arTm}_{R, H, \Iseq} \) and \( \sem{\inTm}_{R, H, \Iseq} \) do not rely on \( \Iseq \) unless they contain \( \hd(\inTm)\) as a subterm.

The satisfaction relation \( R, H, \Iseq \models \fml \) is defined in a standard manner
\begin{align*}
  R, H, \Iseq \models \inTm \in \idxTm \defiff \sem{\inTm}_{R, H, \Iseq} \in \setof{\sem{\idxTm}_{R, H, \Iseq}}
\end{align*}
\begin{align*}
  R, H, \Iseq \models \TRUE &\text{ always holds} \\
  R, H, \Iseq \models \inTm_1 = \inTm_2 &\defiff \sem{\inTm_1}_{R, H, \Iseq} = \sem{\inTm_2}_{R, H, \Iseq} \\
  R, H, \Iseq \models \inTm_1 \le \inTm_2 &\defiff \sem{\inTm_1}_{R, H, \Iseq} \le \sem{\inTm_2}_{R, H, \Iseq} \\
  R, H, \Iseq \models \inTm \in \idxTm &\defiff \sem{\inTm}_{R, H, \Iseq} \in \setof{\sem{\idxTm}_{R, H, \Iseq}} \\
  R, H, \Iseq \models \fml_1 \land \fml_2 &\defiff R, H, \Iseq \models \fml_1 \text{ and } R, H, \Iseq \models \fml_2 \\
  R, H, \Iseq \models \neg \fml  &\defiff R, H, \Iseq \not \models \fml
\end{align*}
Here \( \mathit{setof} \) is simply a cast operator that converts a sequence of indices to a set of sequences.
In the sequel we may simply use the notation \( n \in \seq{m} \), to make the notation concise.

\begin{remark}[Partiality of the interpretation]
  \label{rmk:partiality}
  Since \( \sem{a}_{R, H, \Iseq} \) is defined as a partial map and we use partial operations such as \( \hd \) and \( \tl\) the interpretation of terms may be undefined.
  We stipulate that if a formula contains a term whose interpretation is undefined, then its semantics is false.
  Another way to deal with this partiality is to define the interpretation of array variables as a total function by using a dummy value, say \( 0 \), for the case \( a[e] \) is undefined, and add a formula for boundary check such as \( e \in \iset a \) for each array access; similarly we can do a ``nil check'' before taking the head or the tail of a sequence.
  While such encoding might be needed when we invoke SMT solvers, here, we do not take this approach to keep the presentation simple.
\qed
\end{remark}

We list some substitution lemmas that will be used in the sequel.

\begin{lemma}
\label{lem:assertion-subst-strm}
  Let \( \inTm \) be an integer term and \( x \) be an integer variable.
  Suppose that \( \sem{\inTm}_{R, H, \Iseq} = n \).
  Then we have
  \begin{enumerate}
    \item \( \sem{\sub \inTm  x \arTm}_{R, H, \Iseq} =  \sem{\arTm}_{R \set{x \mapsto n}, H, \Iseq}\)
    \item \( \sem{\sub \inTm x \inTm_2}_{R, H, \Iseq} =  \sem{\inTm_2}_{R \set{x \mapsto n}, H, \Iseq}\),
    \item \( \sem{\sub \inTm x \idxTm}_{R, H, \Iseq} =  \sem{\idxTm}_{R \set{x \mapsto n}, H, \Iseq}\), and
    \item \( R, H, \Iseq \models \sub \inTm x \fml \) if and only if \( R \set{x \mapsto n}, H, \Iseq \models \fml \).
  \end{enumerate}
  \sk{Provided that the interpretations are well-defined.}
\end{lemma}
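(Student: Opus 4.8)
The plan is to prove the four statements simultaneously by mutual structural induction on the grammar of array terms $\arTm$, integer terms $\inTm$, index-sequence terms $\idxTm$, and formulas $\fml$. A single simultaneous induction is essential rather than four separate ones, because the four syntactic categories are defined by mutual recursion: an array term may contain integer terms (inside $\updateAr{\arTm}{\inTm_1}{\inTm_2}$), an integer term may contain an array term ($\arTm[\inTm]$) and an index-sequence term ($\hd(\idxTm)$), and so on. Throughout, I fix $R$, $H$, $\Iseq$, and $n$ with $\sem{\inTm}_{R,H,\Iseq}=n$, and I compare interpretations under $R$ (after substitution) against interpretations under $R\set{x\mapsto n}$ (before substitution); since $H$ and $\Iseq$ are never altered, only the register-file component is affected.

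First I would dispatch the base cases. The single place where the hypothesis $\sem{\inTm}_{R,H,\Iseq}=n$ is actually used is the integer-term base case in which the target term is an expression $e$, specifically the subcase $e=x$: there $\sub{\inTm}{x}{x}=\inTm$, so the left-hand side is $n$ by hypothesis, while the right-hand side is $R\set{x\mapsto n}(x)=n$. The remaining expression subcases ($e$ a constant, $e$ a variable $y\ne x$, and $e=y\mathbin{\OP}z$) are immediate, since the substitution either leaves $e$ untouched or rewrites occurrences of $x$, and $R\set{x\mapsto n}$ agrees with $R$ off $x$. The array-term base case $\arTm=a$ and the index-sequence base cases $\nil$ and $\iset{a}$ are all trivial, because $x$ is an \emph{integer} variable distinct from every array variable $a$ and every index-sequence variable $\iset{a}$, so the substitution acts as the identity on them and their interpretations depend only on the unchanged $H$ and $\Iseq$.

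For the inductive cases I would simply propagate the induction hypotheses through the compositional semantic clauses. For $\arTm[\inTm']$ I combine the IH for array terms (part~(1)) with the IH for integer terms (part~(2)); for $\updateAr{\arTm}{\inTm_1}{\inTm_2}$, $\inTm_1\mathbin{\OP}\inTm_2$, $\hd(\idxTm)$, $\tl(\idxTm)$, and the head/tail concatenation constructors the reasoning is analogous, each step invoking the relevant IH and unfolding one semantic clause. Part~(4) then follows from parts~(2) and~(3) applied to the atomic formulas $\inTm_1=\inTm_2$, $\inTm_1\le\inTm_2$, and $\inTm\in\idxTm$, together with a routine induction over $\land$ and $\neg$.

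The main obstacle I anticipate is the treatment of partiality (see Remark~\ref{rmk:partiality}): array terms denote \emph{partial} maps and $\hd,\tl$ are partial operations, so the equalities in parts~(1)--(3) must be read as Kleene equalities, and I must check that substitution preserves definedness. Concretely, in the $\hd(\idxTm)$ and $\tl(\idxTm)$ cases I would use the sequence-IH to show that $\sem{\sub{\inTm}{x}{\idxTm}}_{R,H,\Iseq}$ is the empty sequence exactly when $\sem{\idxTm}_{R\set{x\mapsto n},H,\Iseq}$ is, so that both sides are simultaneously undefined or equal; the analogous point arises for the array lookup $\arTm[\inTm']$ when the index falls outside the domain of the heap. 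Apart from this definedness bookkeeping, every step is a direct unfolding of the semantic definitions, so the work is essentially notational rather than conceptual.
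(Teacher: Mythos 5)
Your proposal is correct and follows essentially the same route as the paper, whose proof is exactly a simultaneous structural induction on terms with parts (1)--(3) proved together and part (4) derived from them. Your explicit treatment of partiality via Kleene equality in the $\hd$, $\tl$, and array-lookup cases is a welcome elaboration of the caveat the paper leaves as a side remark, but it does not change the argument.
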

\begin{proof}
  By induction on the structure of terms and formulas.
  The case (1), (2) (3) are proved simultaneously, and the proof of (4) uses (1) to (3).
  \qed
\end{proof}
\begin{lemma}
\label{lem:assertion-subst-arr-strm}
  Let \( \inTm_1 \) and \( \inTm_2 \) be an integer term and \( a \) be an array variable.
  Suppose that \( \sem{\inTm_1}_{R, H, \Iseq} = m \) and \( \sem{\inTm_1}_{R, H, \Iseq} = n \).
  Then we have
  \begin{enumerate}
    \item \( \sem{\sub {\updateAr a {\inTm_1} {\inTm_2}}  a \arTm}_{R, H, \Iseq} =  \sem{\arTm}_{R, H\set{(a, m) \mapsto n}, \Iseq}\)
    \item \( \sem{\sub {\updateAr a {\inTm_1} {\inTm_2}} a \inTm_3}_{R, H, \Iseq} =  \sem{\inTm_3}_{R, H\set{(a, m) \mapsto n}, \Iseq}\),
    \item \( \sem{\sub {\updateAr a {\inTm_1} {\inTm_2}} a \idxTm}_{R, H, \Iseq} =  \sem{\idxTm}_{R, H\set{(a, m) \mapsto n}, \Iseq}\), and
    \item \( R, H, \Iseq \models \sub {\updateAr a {\inTm_1} {\inTm_2}} a \fml \) if and only if \(R, H\set{(a, m) \mapsto n}, \Iseq \models \fml \).
  \end{enumerate}
  \sk{Provided that the interpretations are well-defined.}
\end{lemma}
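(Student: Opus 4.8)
The plan is to establish (1), (2), and (3) simultaneously by mutual structural induction on the grammars of array terms, integer terms, and index-sequence terms, and then to obtain (4) from them by a routine clause-by-clause induction on the structure of formulas, exactly mirroring the proof of Lemma~\ref{lem:assertion-subst-strm}. The guiding observation is that the syntactic substitution $\sub{\updateAr{a}{\inTm_1}{\inTm_2}}{a}$ is designed to mirror the semantic heap update $H\set{(a,m) \mapsto n}$ precisely, so that the two sides of each equation are forced to coincide wherever the array $a$ is actually consulted. Crucially, because $m$ and $n$ are fixed by hypothesis as the interpretations of $\inTm_1$ and $\inTm_2$ under the \emph{original} heap $H$, the statement remains self-consistent even if $\inTm_1$ or $\inTm_2$ happen to mention $a$: the substitution is single-pass and copies $\inTm_1,\inTm_2$ verbatim, while the update $H\set{(a,m)\mapsto n}$ records only their already-computed values.

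The single case that carries the content of the lemma is the base case of (1) in which the array term is the variable $a$ itself. There the left-hand side unfolds to $\sem{\updateAr{a}{\inTm_1}{\inTm_2}}_{R, H, \Iseq}$, the partial map sending $m'$ to $n$ when $m' = \sem{\inTm_1}_{R,H,\Iseq} = m$ and to $H(a,m')$ otherwise, while the right-hand side $\sem{a}_{R, H\set{(a,m) \mapsto n}, \Iseq}$ sends $m'$ to $n$ when $m'=m$ and to $H(a,m')$ otherwise; these agree as partial maps, and the definitional match hinges on $m,n$ being evaluated under the original $H$. For any other array variable $b \ne a$ the substitution is the identity and $H\set{(a,m)\mapsto n}$ leaves every address $(b,\cdot)$ untouched, so both sides reduce to $\sem{b}$. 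All remaining constructors---the array update, the array access, integer operations, $\hd$, $\tl$, and the list concatenations---are handled by pushing the substitution inward and invoking the induction hypotheses on the immediate subterms.

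I expect the only real care to lie in the bookkeeping around partiality (Remark~\ref{rmk:partiality}): since $\sem{a}_{R,H,\Iseq}$ is a partial map and $\hd,\tl$ are partial operations, each equation in (1)--(3) must be read as asserting that the two interpretations have the same domain and agree there. The induction hypotheses must therefore be phrased to transport definedness in both directions; in the access case $\arTm[\inTm']$ one checks that the substituted array term and the heap-updated array term are defined at exactly the same argument, which follows from the hypothesis for $\arTm$ together with the hypothesis for $\inTm'$. Once definedness is tracked, the value-level equalities are immediate from the hypotheses, and part (4) then follows by the standard argument, using (2) and (3) to equate the interpretations of every integer and index-sequence subterm occurring in the atomic formulas.
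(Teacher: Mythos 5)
Your proposal is correct and takes essentially the same route as the paper, whose proof is simply ``Similar to the previous lemma,'' i.e.\ the simultaneous structural induction on array, integer, and index-sequence terms for (1)--(3) followed by a formula induction for (4), exactly as in Lemma~\ref{lem:assertion-subst-strm}. Your elaborations---the base case $\arTm \equiv a$ where the update semantics and the heap update coincide, the observation that $\inTm_1,\inTm_2$ are evaluated under the original $H$ so occurrences of $a$ inside them cause no trouble, and the domain-tracking for partiality flagged in Remark~\ref{rmk:partiality}---merely spell out details the paper leaves implicit.
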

\begin{proof}
  Similar to the previous lemma.
  \qed
\end{proof}
\begin{lemma}
  \label{lem:assertion-subst-index-strm}
  \noindent
  \begin{enumerate}
    \item Suppose that \( \Iseq(a) \) is nonempty. Then we have
      \begin{enumerate}
      \item \( \sem{\sub {(\tl(\iset a)} {\iset a} \idxTm}_{R, H, \Iseq} =  \sem{\idxTm}_{R, H, \Iseq \set{ a \mapsto \tl(\Iseq(a))}}\) and
      \item \( R, H, \Iseq \models \sub {\tl(\iset a)} {\iset a} \fml \) if and only if \(R, H, \Iseq \set{ a \mapsto \tl(\Iseq(a))} \models \fml \).
      \end{enumerate}
    \item Suppose the \( \sem{\inTm}_{R, H, \Iseq} = n\). Then we have
      \begin{enumerate}
      \item \( \sem{\sub {\iset a \cdot \inTm} {\iset a} \idxTm}_{R, H, \Iseq} =  \sem{\idxTm}_{R, H, \Iseq \set{a \mapsto \Iseq(a) \cdot n}}\) and
      \item \( R, H, \Iseq \models \sub {\iset a \cdot  \inTm} {\iset a} \fml \) if and only if \(R, H, \Iseq \set{a \mapsto \Iseq(a) \cdot n}\models \fml \).
      \end{enumerate}
  \end{enumerate}
\end{lemma}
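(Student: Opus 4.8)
The plan is to prove all four claims by a single simultaneous structural induction over the mutually recursive grammar of terms, exactly as in Lemmas~\ref{lem:assertion-subst-strm} and~\ref{lem:assertion-subst-arr-strm}. Although the lemma only records the index-sequence-term cases~(a) and the formula cases~(b), the substitutions \(\sub{\tl(\iset a)}{\iset a}\) and \(\sub{\iset a \cdot \inTm}{\iset a}\) can reach inside integer terms and array terms through the constructor \(\hd(\idxTm)\), so the induction must additionally carry the analogous statements for integer terms \(\inTm\) and array terms \(\arTm\): namely \(\sem{\sub{\cdots}{\iset a}\inTm}_{R,H,\Iseq} = \sem{\inTm}_{R,H,\Iseq'}\) and likewise for \(\arTm\), where \(\Iseq' = \Iseq\set{a \mapsto \tl(\Iseq(a))}\) in part~(1) and \(\Iseq' = \Iseq\set{a \mapsto \Iseq(a)\cdot n}\) in part~(2). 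Both parts follow the same template and differ only in the single base case for \(\iset a\); I would present part~(1) in detail and indicate that part~(2) is symmetric.

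The crux is the base case \(\idxTm = \iset a\). For part~(1), \(\sem{\sub{\tl(\iset a)}{\iset a}\iset a}_{R,H,\Iseq} = \sem{\tl(\iset a)}_{R,H,\Iseq} = \tl(\Iseq(a))\), which is well-defined precisely because \(\Iseq(a)\) is assumed nonempty, and this equals \(\Iseq'(a) = \sem{\iset a}_{R,H,\Iseq'}\). For part~(2), \(\sem{\iset a \cdot \inTm}_{R,H,\Iseq} = \Iseq(a)\cdot n\) by the hypothesis \(\sem{\inTm}_{R,H,\Iseq}=n\) (note that \(\inTm\) is inserted verbatim, hence read in the original \(\Iseq\)), which again equals \(\Iseq'(a)\). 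For a distinct variable \(\iset b\) the substitution and the update both act as the identity, so both sides reduce to \(\Iseq(b)\). Every remaining constructor is handled by pushing the substitution inward and invoking the induction hypotheses: the semantic operations \(\cdot\), \(\tl\), \(\hd\), \(\OP\), array lookup \(\arTm[\inTm]\), and update \(\updateAr{\arTm}{\inTm_1}{\inTm_2}\) are applied to arguments that already match by the induction hypothesis, so the congruence is immediate. In particular the only place where an integer term genuinely consults \(\Iseq\) is \(\hd(\idxTm)\), which reduces to the index-sequence case, and the only place an index term consults \(\Iseq\) is \(\iset a\) itself.

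The formula cases (1)(b) and (2)(b) then follow directly: the atomic formulas \(\inTm_1 = \inTm_2\), \(\inTm_1 \le \inTm_2\), and \(\inTm \in \idxTm\) reduce to the already-established integer-term and index-sequence-term correspondences, while \(\land\) and \(\neg\) reduce to the induction hypotheses on their immediate subformulas, using the satisfaction clauses verbatim.

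The one genuinely delicate point---and the reason the earlier lemmas carry the caveat ``provided the interpretations are well-defined''---is partiality. Because \(\hd\) and \(\tl\) are undefined on the empty sequence (and array lookups can fall outside \(\dom(H)\)), the equalities in~(a) must be read as ``the left side is defined iff the right side is, and they agree when defined,'' and the biconditional in~(b) must rest on the convention of Remark~\ref{rmk:partiality} that a formula containing an undefined subterm is false. I would therefore strengthen each inductive clause to include this definedness equivalence and carry it alongside the value equality. The subtle sub-case is part~(1), where substituting \(\tl(\iset a)\) can empty the sequence and thereby undefine an enclosing \(\hd\) or \(\tl\); here one checks that \(\sub{\tl(\iset a)}{\iset a}\) on the left evaluates to exactly the sequence \(\Iseq'(a)\) that \(a\) is mapped to on the right, so definedness of any outer \(\hd\)/\(\tl\) coincides on both sides. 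Once the partiality convention is fixed, the hypothesis that \(\Iseq(a)\) is nonempty is used only to justify the outermost \(\tl\) in the base case, and all remaining cases are routine congruences.
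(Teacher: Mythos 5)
Your proposal is correct and follows essentially the same route as the paper, which proves this lemma by structural induction on index sequence terms and formulas (simultaneously with the other term sorts, as in Lemmas~\ref{lem:assertion-subst-strm} and~\ref{lem:assertion-subst-arr-strm}), with the base case \(\iset a\) carrying all the content. Your additional care about the mutual recursion through \(\hd(\idxTm)\) and about definedness under the partiality convention of Remark~\ref{rmk:partiality} makes explicit what the paper's terse proof leaves implicit, but it is the same argument.
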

\begin{proof}
  By induction on the structure of the index sequence terms and formulas.
  \qed
\end{proof}

\subsection{Proof of the Correctness}
\paragraph{Notation.} For convenience, the simulation relation \( \relRH R H {R'} {S'} \ATE \fml \), will sometimes be written as \( \relRHS R H {R'} {S'} \ATE \fml \Iseq \) exhibiting the \( \Iseq \), which is a witness of the simulation relation.
To avoid the overuse of \( = \), we write \( \equiv \) for equality of syntactical terms such as integer terms and formulas.
For example, we will write \( \fml \equiv b = a[0] \); here the \( \equiv \) represents the syntactical equality whereas \( = \) is the equality symbol of the first-order logic.

We start with proving some auxiliary lemmas.

\begin{lemma}
  \label{lem:relRH-preserves-eval-expr}
  Suppose that \( \relRH{R}{H}{R'}{S}{\ATE}{\fml} \) and \( \ATE \vdash e : \INT \).
  Then we have \( \econfig R e \eval n \) if and only if \( \econfig {R'} e \eval n \).
\end{lemma}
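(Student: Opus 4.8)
The plan is to reduce the statement to the first clause of the simulation relation by a simple case analysis on the shape of \( e \). Since \( \ATE \p e : \INT \) guarantees that every variable occurring in \( e \) has type \( \INT \) in \( \ATE \), and since evaluating a pure integer expression consults neither the heap nor the stream pool, the only ingredient I need from \( \relRH{R}{H}{R'}{S}{\ATE}{\fml} \) is that \( R(x) = R'(x) \) for each \( x \) with \( x \COL \INT \in \ATE \).

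First I would unfold the definition of the simulation relation to extract this clause. Then, because the grammar of arithmetic expressions is \( e ::= n \mid x \mid x\ \OP\ y \), there are exactly three cases to treat, and no induction is required since expressions have depth at most one. If \( e = n \), rule \rn{R-Const} gives \( \econfig{R}{n} \eval n \) and \( \econfig{R'}{n} \eval n \), so both sides hold for the same value. If \( e = x \), then \( \ATE \p x : \INT \) forces \( x \COL \INT \in \ATE \), hence \( R(x) = R'(x) \); by \rn{R-Var}, \( \econfig{R}{x} \eval n \) iff \( R(x) = n \) iff \( R'(x) = n \) iff \( \econfig{R'}{x} \eval n \). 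If \( e = x\ \OP\ y \), both \( x \) and \( y \) are typed \( \INT \) in \( \ATE \), so \( R(x) = R'(x) \) and \( R(y) = R'(y) \); by \rn{R-Op} both configurations evaluate deterministically to \( \llbracket\OP\rrbracket(R(x), R(y)) = \llbracket\OP\rrbracket(R'(x), R'(y)) \).

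There is essentially no substantive obstacle: the lemma is an immediate consequence of the agreement of \( R \) and \( R' \) on integer variables, and this auxiliary lemma will later be invoked wherever the correctness proof needs that array-free expressions evaluate identically in the source and target register files. The only points needing mild care are to confirm that \( \ATE \p e : \INT \) records every variable of \( e \) with type \( \INT \), so that the clause \( R(x) = R'(x) \) applies to each of them, and to read \( R(x) = R'(x) \) as asserting that both sides are defined and equal, which is what justifies the equivalence in the variable case in both directions.
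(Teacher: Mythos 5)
Your proposal is correct and matches the paper's proof, which simply says ``by a straightforward induction on the derivation of the evaluations'': since the expression grammar \( e ::= n \mid x \mid x\ \OP\ y \) is flat, that induction degenerates into exactly your three-case analysis using the clause \( R(x)=R'(x) \) for \( x\COL\INT\in\ATE \) from the simulation relation. Your explicit remarks---that \( \ATE\p e:\INT \) forces every variable of \( e \) to have type \( \INT \), and that both directions of the equivalence follow from determinism of the rules \rn{R-Const}, \rn{R-Var}, and \rn{R-Op}---just spell out what the paper leaves implicit.
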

\begin{proof}
  By a straightforward induction on the derivation of the evaluations.
  \qed
\end{proof}

The following is a correctness result for the auxiliary relation \( \bjs{\ATE}{\fml}{\fml'}{\iexp} \).
While this is stated as a lemma it may be considered as one of the key parts of the proof of the correctness.
\begin{lemma}
  \label{lem:buf-trans-insert-respects-invariants}
  Assume that \( \bjs{\ATE}{\fml}{\fml'}{\iexp} \) and \( \relRH{R}{H}{R_1}{S_1}{\ATE}{\fml} \).
  Then there uniquely exist \( R_2\) and \( S_2\) such that \( \sconfig {R_1} {S_1} u \eval \sstate {R_2}{S_2}\), and moreover, \( \relRH{R}{H}{R_2}{S_2}{\ATE}{\fml'} \).
\end{lemma}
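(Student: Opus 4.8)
The plan is to proceed by induction on the derivation of $\bjs{\ATE}{\fml}{\fml'}{\iexp}$. Only \rn{Tr-InsConseq} has a premise of the same judgment form, so it is the single inductive step, while \rn{Tr-InsRBuf}, \rn{Tr-InsWBuf}, and \rn{Tr-InsMove} are the base cases. In each base case the inserted command $\iexp$ is one of $b:=\RStr{a}$, $\WStr{a}{b}$, or $x:=y$, each of whose reduction is deterministic; hence uniqueness of $\sstate{R_2}{S_2}$ is automatic once we know the reduction does not get stuck. Fixing a witness $\Iseq$ for the hypothesis $\relRH{R}{H}{R_1}{S_1}{\ATE}{\fml}$, I would then exhibit an updated witness $\Iseq_2$ and verify the four clauses of $\relRH{R}{H}{R_2}{S_2}{\ATE}{\fml'}$. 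The first three (agreement on integer variables, the domain inclusion $\Iset{H}{a}\supseteq\setof{\Iseq(a)}$, and pointwise heap/stream agreement) are preserved by small bookkeeping arguments, and the last clause $R_2,H,\Iseq_2\models\fml'$ is discharged using the substitution lemmas (Lemmas~\ref{lem:assertion-subst-strm}--\ref{lem:assertion-subst-index-strm}), extended so that the substituted variable in Lemma~\ref{lem:assertion-subst-strm} may also be a buffer variable (buffer variables live in $R$ and are interpreted exactly like integer variables).

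For \rn{Tr-InsRBuf} we have $\iexp\equiv b:=\RStr{a}$ and $\fml\equiv[a[\hd(\iset a)]/b,\tl(\iset a)/\iset a]\fml'$. Since $\fml$ mentions $\hd(\iset a)$ and $\tl(\iset a)$, the partiality convention (Remark~\ref{rmk:partiality}) together with $R_1,H,\Iseq\models\fml$ forces $\Iseq(a)$ to be nonempty, and clause~3 of the hypothesis then forces $S_1(a)$ to be nonempty; hence \rn{R-Read} applies and yields, deterministically, $R_2=R_1\set{b\mapsto n}$ and $S_2=S_1\set{a\mapsto\tl(S_1(a))}$ with $n=S_1(a)[0]=H(a,\hd(\Iseq(a)))$. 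I would take $\Iseq_2=\Iseq\set{a\mapsto\tl(\Iseq(a))}$; clauses~1--3 follow because dropping the head of both $\Iseq(a)$ and $S_1(a)$ preserves their pointwise agreement. For clause~4 I would decompose the simultaneous substitution as $[a[\hd(\iset a)]/b]\bigl([\tl(\iset a)/\iset a]\fml'\bigr)$ (legitimate because $b$ and $\iset a$ are distinct symbols and the term substituted for $b$ retains the original $\iset a$), apply Lemma~\ref{lem:assertion-subst-strm} to the buffer substitution using $\sem{a[\hd(\iset a)]}_{R_1,H,\Iseq}=n$, and then apply Lemma~\ref{lem:assertion-subst-index-strm}(1) to the tail substitution using that $\Iseq(a)$ is nonempty; together these give $R_2,H,\Iseq_2\models\fml'$.

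For \rn{Tr-InsWBuf}, with $\iexp\equiv\WStr{a}{b}$ and precondition $\ell\notin\iset a\land a[\ell]=b\land[\iset a\cdot\ell/\iset a]\fml'$ (writing $\ell$ for the index to avoid clashing with the read value above), \rn{R-Write} always applies, giving $R_2=R_1$ and $S_2=S_1\set{a\mapsto S_1(a)\cdot R_1(b)}$. Letting $v=\sem{\ell}_{R_1,H,\Iseq}$ and $\Iseq_2=\Iseq\set{a\mapsto\Iseq(a)\cdot v}$, the conjunct $a[\ell]=b$ supplies both $(a,v)\in\dom(H)$ (so clause~2 holds at the new index $v$) and $H(a,v)=R_1(b)=S_2(a)[|\Iseq(a)|]$ (so clause~3 holds at the new position), while clause~4 follows from the third conjunct by Lemma~\ref{lem:assertion-subst-index-strm}(2). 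The case \rn{Tr-InsMove} ($\iexp\equiv x:=y$) is the simplest: \rn{R-Assign} gives $R_2=R_1\set{x\mapsto R_1(y)}$, $S_2=S_1$, $\Iseq_2=\Iseq$, and clause~4 follows from $\fml\equiv[y/x]\fml'$ by Lemma~\ref{lem:assertion-subst-strm}. Finally, \rn{Tr-InsConseq} is immediate: the entailment $\models\fml\Imp\fml_1$ upgrades the hypothesis to $\relRH{R}{H}{R_1}{S_1}{\ATE}{\fml_1}$ (clauses~1--3 are independent of the formula), the induction hypothesis supplies the unique $\sstate{R_2}{S_2}$ with $\relRH{R}{H}{R_2}{S_2}{\ATE}{\fml_1'}$, and $\models\fml_1'\Imp\fml'$ weakens the conclusion to $\fml'$.

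I expect the main obstacle to be the \rn{Tr-InsRBuf} case, for two reasons: first, establishing that the stream read does not get stuck (i.e.\ $S_1(a)\neq\epsilon$), which hinges on reading off nonemptiness of $\Iseq(a)$ from the head/tail terms occurring in $\fml$ together with the clause~3 matching; and second, correctly accounting for the simultaneous substitution $[a[\hd(\iset a)]/b,\tl(\iset a)/\iset a]$ so that the single-substitution lemmas can be chained. A secondary subtlety, needed in the \rn{Tr-InsWBuf} case, is that the agreement at the freshly appended position requires $|S_1(a)|=|\Iseq(a)|$; this length equality is an invariant maintained throughout the translation (for each array, stream and index sequence grow and shrink in lockstep), and it should be recorded as part of, or derived alongside, the simulation relation.
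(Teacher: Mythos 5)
Your proof is correct and follows essentially the same route as the paper's: induction on the derivation of \(\bjs{\ATE}{\fml}{\fml'}{\iexp}\) with \textsc{Tr-InsConseq} as the sole inductive case, deterministic reduction giving uniqueness, the witnesses \(\Iseq\set{a\mapsto\tl(\Iseq(a))}\), \(\Iseq\set{a\mapsto\Iseq(a)\cdot v}\), and \(\Iseq\) unchanged in the three base cases, and Lemmas~\ref{lem:assertion-subst-strm}--\ref{lem:assertion-subst-index-strm} discharging the satisfaction clause. Your two side remarks are both apt: the paper does tacitly apply Lemma~\ref{lem:assertion-subst-strm} to the buffer variable \(b\), and the length equality \(|S_1(a)|=|\Iseq(a)|\) you flag in the \textsc{Tr-InsWBuf} case is genuinely needed for the appended element to land at position \(|\Iseq(a)|\) of the stream, yet is not forced by the stated definition of \(\sim\) (which constrains only the first \(|\Iseq(a)|\) stream entries), so your suggestion to record it as part of the simulation relation tightens a step the paper's own proof takes silently.
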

\begin{proof}
  By induction on the derivation of \( \bjs{\ATE}{\fml}{\fml'}{\iexp} \) with a case analysis on the last rule used.
  \begin{description}
    \item[\rn{Tr-InsRBuf}]
      It must be the case that \( \iexp \equiv b := \RStr a \), where \( \ATE(a) = \RARR \) and \( \ATE(b) = \BUF\), and
      \begin{align*}
        \fml = [a[\hd(\iset{a})]/b,(\tl(\iset{a}))/\iset{a}]\fml'.
      \end{align*}
      Observe that the only reduction rule that can be applied to the statement of the form \( b:= \RStr a \) is \rn{R-Read}.
      In order to apply this rule, we must check whether \( S_1(a) = n \cdot \seq m \), that is \( S_1(a) \) is not an empty string.
      Indeed, \( S_1(a) \) is non-empty because \( R_1, H, \Iseq \models \fml \) for some \( \Iseq \); recall that if \( \hd(\iset{a}) \) is undefined, then \( \fml \) is false.
      Therefore, we have \( \sconfig {R_1} {S_1} {b := \RStr a} \eval \sstate{R_1 \set{b \mapsto n}} {S_1 \set{a \mapsto \seq m}} \).
      It remains to show \( \relRHS{R}{H}{R_1 \set{b \mapsto n}}{S \set{a \mapsto \seq m}}{\ATE}{\fml'}{\Iseq'} \) for some \( \Iseq' \).
      For \( \Iseq' \), we take \( \Iseq \set{a \mapsto \tl(\Iseq(a))} \).
      We need to show
      \begin{align*}
        &R(x) =(R_1 \set{b \mapsto n})(x) \text{ for each $x\COL\INT\in \ATE$}\\
        &\Iset{H}{a'} \supseteq \setof{\Iseq'(a')} \mbox{ for each $a'$ such that $\ATE(a')\in\set{\RARR,\WARR}$}\\
        &\begin{aligned}
          &H(a',\Iseq'(a')[i]) = \Strm{\set{a \mapsto \seq m}}(a')[i] \\
          &\quad \mbox{ for each $a'$ such that $\ATE(a')\in\set{\RARR,\WARR}$, $0\le i < \length{\Iseq'(a')}$}\\
        &R_1,H, \Iseq' \models \fml'
          \end{aligned}.
      \end{align*}
      Most of these are straightforward consequence of \( \relRHS{R}{H}{R_1}{S_1}{\ATE}{\fml}{\Iseq} \).
      The first condition holds because, only the \( b \) part of \( R_1 \) has been updated and \( b \) does not have type \( \INT \).
      The second condition holds because \( \setof{\Iseq'(a')} \subseteq \setof{\Iseq(a')} \) for all \( a' : \RARR \in \ATE \).
      The third condition holds since
      \[
      H(a,\Iseq'(a)[i]) = H(a, \Iseq(a)[i + 1]) = S_1(a)[i + 1] = S_1\set{a \mapsto \seq m}(a)[i]
      \]
      for all \( 0 \le i < |\Iseq(a)| - 1 \), and \( \Iseq' \) is the same as \( \Iseq \) for addresses other than \( a \).
      Finally, the satisfaction relation holds thanks to Lemma~\ref{lem:assertion-subst-strm} and~\ref{lem:assertion-subst-index-strm} with the facts that \( \sem{a[\hd(\iset a)]}_{R_1, H, \Iseq} = n\) and \( \Iseq'  = \Iseq \set{a \mapsto \tl(\Iseq(a))} \).
    \item[\rn{Tr-InsWBuf}]
      In this case, we have \( \iexp = \WStr a b \), where \( \ATE(a) = \WARR \) and \( \ATE(b) = \BUF\), and
      \begin{align*}
        \fml = n \not\in \iset{a}\land a[n]=b\land [(\iset{a} \cdot n)/\iset{a}]\fml'.
      \end{align*}
      The only reduction rule that can be applied to the statement of the form \( \WStr a b \) is \rn{R-Write}.
      Hence, we have \( \sconfig {R_1} {S_1} {\WStr a b} \eval \sstate {R_1} {S_1\set{a \mapsto S_1(a) \cdot R_1(b)}} \) and this is the unique resulting state.
      We prove that \( \relRHS{R}{H}{R_1}{S_1\set{a \mapsto S_1(a) \cdot R_1(b)}}{\ATE}{\fml'}{\Iseq'}\) where \( \Iseq' \defeq \Iseq \set{a \mapsto \Iseq(a) \cdot n}\) and \( \Iseq \) is a chosen witness such that \(  \relRHS{R}{H}{R_1}{S_1}{\ATE}{\fml}{\Iseq} \).
      It should be noted that \( \Iseq' \) is well-defined in the sense that \( \Iseq'(a) \) is a sequence without any duplicated element because of the precondition \( n \notin \iset a \).
      The only nontrivial conditions to check to show that  \( \relRHS{R}{H}{R_1}{S_1\set{a \mapsto S_1(a) \cdot R_1(b)}}{\ATE}{\fml'}{\Iseq'}\)  are:
      \begin{gather*}
        \Iset{H}{a}  \supseteq \setof{\Iseq'(a)} \\
        H(a, \Iseq'(a)[\ell]) = S_1\set{a \mapsto S_1(a) \cdot R_1(b)}(a)[\ell] \text{where \( \ell = |\Iseq'| - 1\)} \\
        R_1,H, \Iseq \{ a \mapsto \Iseq(a) \cdot n \} \models \fml'.
      \end{gather*}
      (The other cases are trivial because the register file is untouched, and the only stream that was modified is that stored in the address \( a \).)
      To check the first condition, since \( \setof{\Iseq'(a)} = \setof{\Iseq(a)} \cup \{ n \} \) and \( \Iset{H}{a} \supseteq \setof{\Iseq(a)} \), it suffices to show that \( n \in \Iset{H}{a}  \).
      This holds because \( H, R_1, \Iseq \models b = a[n]\).
      In more detail, we have \( \sem{a[n]}_{H, R_1, \Iseq} = H(a, n) \) and \( \sem{b}_{R_1, H, \Iseq} = R_1(b) \), and thus \( H(a, n) = R_1(b) \).
      This equality allows us to prove the second condition because
      \[
       H(a, \Iseq'(a)[\ell]) = H(a, n) = R_1(b) = S_1\set{a \mapsto S_1(a) \cdot R_1(b)}(a)[\ell]
      \]
      The satisfaction relation holds by Lemma~\ref{lem:assertion-subst-index-strm}.

    \item[\rn{Tr-InsMove}]
      In this case, we have \( \iexp \equiv x := y \), where \( \ATE(x) = \BUF \) and \( \ATE(y) \in \set{\INT, \BUF} \), and \( \fml = \sub y x \fml'\).

      The rule \rn{R-Assign} is the only rule that can be applied to \( \sconfig {R_1} {S_1} {x := y} \), and thus, \( \sstate{R_1 \set{x \mapsto R_1(y)}} {S_1}\) is the only state such that \( \sconfig {R_1} {S_1} {x:=y} \eval \sstate{R_1 \set{x \mapsto R_1(y)}} {S_1} \).
      We need to check that \(  \relRH{R_1 \set{x \mapsto R_1(y)}}{S_1}{R_1}{S_1}{\ATE}{\fml'} \).
      Since \( \relRHS{R}{S}{R_1}{S_1}{\ATE}{\fml}{\Iseq} \), for some \( \Iseq \), and the stream pool is unchanged by the reduction, it suffices to check
      \begin{gather*}
        R(z) = R_1 \set{x \mapsto R_1(y)}(z) \text{ for each \( z : \INT \in \ATE \)}\\
        R_1 \set{x \mapsto R_1(y)}, H, \Iseq \models \fml'
      \end{gather*}
      The former holds because \( R(z) = R_1(z) \) for every \( z : \INT \in \ATE \) and \( x \) is not of type \( \INT \).
      The latter is a consequence of Lemma~\ref{lem:assertion-subst-strm} together with the facts \( R_1, H, \Iseq \models \sub y x \fml'\) and \( \sem{y}_{R_1, H, \Iseq} = R_1(y) \).
    \item[\rn{Tr-Conseq}]
      In this case, we have
      \begin{align*}
        \bjs{\ATE}{\fml_1}{\fml_1'}{\iexp}
      \end{align*}
      for some \( \fml_1 \) and \( \fml_1' \) satisfying
      \begin{align*}
        \models \fml \implies \fml_1 \quad \text{and} \quad \models \fml_1' \implies \fml'.
      \end{align*}
      Since \( \relRHS{R}{H}{R_1}{S_1}{\ATE}{\fml}{\Iseq} \) for some \( \Iseq \), we have \( R_1, H, \Iseq \models \fml \), and thus \( R_1, H, \Iseq \models \fml_1 \) from the above implication.
      Therefore, we have \( \relRHS{R}{H}{R_1}{S_1}{\ATE}{\fml_1}{\Iseq} \).
      By the induction hypothesis, there uniquely exists \( \sstate {R_2}{S_2} \) such that \( \sconfig {R_1}{S_1} \iexp \eval \sstate {R_2}{S_2} \).
      Moreover, we have \( \relRH R H {R_2} {S_2} \ATE {\fml_1'} \).
      It remains to show that \( \relRH R H {R_2} {S_2} \ATE {\fml'} \), but this is obvious from \( \relRH R H {R_2} {S_2} \ATE {\fml_1'} \) and \( \models \fml_1' \implies \fml' \).\qed
  \end{description}
\end{proof}

Now we prove the simulation relations for the stream translation.
\correctnessStream*
\begin{proof}[Proof of (1)]
  By induction on the derivation of \(\sconfig{R}{H}{s} \eval \sstate{R_1}{H_1} \) with a case analysis on the last rule used.

  In the proof, we also do a case analysis on the derivation of the translation.
  We assume, without loss of generality, that the derivation of the translation does not contain any use of \rn{Tr-InsertL}, \rn{Tr-InsertR} and \rn{Tr-Conseq}.
  For \rn{Tr-InsertL} and \rn{Tr-InsertR}, we can simply apply the Lemma~\ref{lem:buf-trans-insert-respects-invariants}.
  If we have  \( \bj \ATE \fml s \iexp {\fml'} \) and the last rule used is \rn{Tr-Conseq}, then we have
  \begin{gather*}
    \models \fml \implies \fml_1 \quad \models \fml_1' \implies \fml' \\
    \bj \ATE {\fml_1} s \iexp {\fml_1'}.
  \end{gather*}
  So, if \( \relRH R H {R'}{H'} \ATE \fml \), then from \( \models \fml \implies \fml_1 \), it is easy to see that we have \( \relRH R H {R'}{H'} \ATE {\fml_1} \).
  Similarly, if \( \relRH R H {R'}{H'} \ATE {\fml_1'} \) is shown, then we have \( \relRH R H {R'}{H'} \ATE {\fml_1} \).
  Hence, it suffices to prove our claim against \( \bj \ATE {\fml_1} s \iexp {\fml_1'} \), implying that we can ignore the use of \rn{Tr-Conseq}.

  Now we proceed to the case analysis.
  In what follows, we let \( \Iseq \) to be the witness of \( \relRH{R}{H}{R'}{S'}{\ATE}{\fml} \), i.e.~we assume that \( \Iseq \) has been chosen such that \( \relRHS{R}{H}{R'}{S'}{\ATE}{\fml}\Iseq \).
  We first prove the interesting cases, namely those that involve array accesses and the case for the kernel code.
  \begin{description}
    \item[\rn{R-ReadArray}]
      In this case, we have
      \[
      \infer
    {\econfig{R}{e}\eval m\andalso H(a,m) = n}
    {\sconfig{R}{H}{x:=a[e]} \eval \sstate{R\set{x\mapsto n}}{H}}
      \]
      By the inversion on the translation rule, we must also have
      \begin{gather*}
        \sj{\ATE}{b = a[e] \land [b/x]\fml'}{x := a[e]}{\fml'}{x := b} \\
        \ATE(a) = \RARR \quad \ATE(x)=\INT \quad  \ATE \vdash e:\INT \quad \ATE(b) = \BUF
      \end{gather*}
      where \( \fml \equiv b = a[e] \land [b/x]\fml' \) and \( t \equiv x:= b\).
      Observe that we have \( H(a, m) = n = R'(b) \) because \( R', H, \Iseq \models b = a[e] \).
      Thus, as the matching evaluation, we can take
      \[
      \infer
      {\econfig{R'}{b} \eval n}
      {\sconfig{R'}{\Strm'}{x:= b} \eval \sstate{R'\set{x\mapsto n}}{\Strm'}}
      \]
      using \rn{R-Read}.
      It remains to show that \( \relRHS{R \set{x \mapsto n}}{H}{R'\set{x\mapsto n}}{\Strm'}{\ATE}{\fml'} {\Iseq} \).
      Since \( H \) and \( \Strm' \) are not modified, the only condition we need to check is
      \[
      R'\set{x \mapsto n}, H, \Iseq \models \fml'.
      \]
      This follows from Lemma~\ref{lem:assertion-subst-strm} with \( R', H, \Iseq \models [b/x]\fml'
\) and \( \sem{a[e]}_{R', H, \Iseq} = n \).
    \item[\rn{R-WriteArray}]
      In this case, the last step of the derivation must be of the form:
      \begin{align*}
        \infer{\econfig{R}{e}\eval m} 
    {\sconfig{R}{H}{a[e]:=x} \eval \sstate{R}{H\set{\Addr{a}{m}\mapsto R(x)}}}
      \end{align*}
      By the inversion on the translation rule, we also have
      \begin{gather*}
        \bj{\ATE}{e\not\in \iset{a}\land [x/b, \updateAr{a}{e}{x}/a]\fml'}{a[e] := x}{\fml'}{b := x}\\
        \fml = e\not\in \iset{a}\land [x/b, \updateAr{a}{e}{x}/a]\fml' \\
        \ATE(x)=\INT \quad \ATE(a)=\WARR \quad \ATE\p e:\INT \quad \ATE(b)=\BUF.
      \end{gather*}
      As the matching transition, we take \( \sconfig {R'} {H'} {b := x} \eval \sstate {R' \set{b \mapsto R(x)}} {H'}\).
      We are left to show that \( \relRHS R {H \set{(a, m) \mapsto R(x)}} {R' \set{b \mapsto R(x)}} {S'} \ATE {\fml'} \Iseq \).
      In particular, we need to show that
      \begin{align*}
      &H(a,\Iseq(a)[i]) = \Strm'(a)[i] \mbox{ for each $a$ such that $\ATE(a)\in\set{\RARR,\WARR}$, $0\le i < \length{\Iseq(a)}$}\\
       & R' ,H, \Iseq \models \fml'.
      \end{align*}
      The other conditions are straightforward consequence of \( \relRHS R H {R'} {S'} \ATE \fml \Iseq \) because \( R' \) is modified only for a buffer variable and the domain of the heap has gotten bigger while \( \Iseq \) remains the same.
      To show that the condition on the relation between \( H \) and \( S' \) holds, it suffices to show that \( m \notin \Iseq(a) \).
      We have \( m \notin \Iseq(a) \) because \( R', H, \Iseq \models e \notin \iset a \), \( \sem{e}_{R', H, \Iseq} = m\) and \( \sem{\iset a}_{R, H, \Iseq} = \Iseq(a)\).
      The satisfaction relation \( R' ,H, \Iseq \models \fml' \) holds by applying Lemma~\ref{lem:assertion-subst-strm} and~\ref{lem:assertion-subst-arr-strm} to \( R', H, \Iseq \models [x/b, \updateAr{a}{e}{x}/a]\fml' \).

    \item[\rn{R-CallKer}]
      For this case, we just need to check that \( \relRH R H {R'} {\Strm'} \ATE \fml \) implies  \( \relRH R H {R'} {\Strm'} {\fliparray \ATE} \fml \);the rest of the proof is a straightforward application of the induction hypothesis.
      This indeed holds because the two conditions
      \begin{align*}
        &\Iset{H}{a} \supseteq \setof{\Iseq(a)} \mbox{ for each $a$ such that $\ATE(a)\in\set{\RARR,\WARR}$}\\
        &H(a,\Iseq(a)[i]) = \Strm'(a)[i] \mbox{ for each $a$ such that $\ATE(a)\in\set{\RARR,\WARR}$, $0\le i < \length{\Iseq(a)}$}
      \end{align*}
      of \( \relRH R H {R'} {\Strm'} \ATE \fml \) involving the array types are symmetric between \( \RARR \) and \( \WARR \).
  \end{description}

  Now we prove the remaining cases, which are rather standard.
  \begin{description}
      \item[\rn{R-Assign}]
      Similar to the case for \rn{R-ReadArray}.
      \sk{I might write the details if I have time.}
    \item[\rn{R-Seq}]
      The last step of the derivation must be of the form:
      \[
      \infer{\sconfig{R}{H}{s_1}\eval \sstate{R_2}{H_2}\andalso
         \sconfig{R_2}{H_2}{s_2}\eval \sstate{R_1}{H_1}}
    {\sconfig{R}{H}{s_1;s_2} \eval \sstate{R_1}{H_1}}
      \]
      where \( s \equiv s_1; s_2\).
      By inversion on the translation rule, we must also have
      \begin{align*}
        \bj{\ATE}{\fml}{s_1}{\fml''}{\iexp_1}
        \quad \text{and} \quad
        \bj{\ATE}{\fml''}{s_2}{\fml'}{\iexp_2}
        \quad \text{with } \iexp \equiv \iexp_1; \iexp_2
      \end{align*}
      Since \( \relRH R H {R'} {S'} \ATE \fml \) from the assumption, we have
      \begin{align*}
        \sconfig{R'}{S'}{\iexp_1}\eval \sstate{R'_2}{S'_2}
        \quad \text{and} \quad
        \relRH{R_2}{H_2}{R'_2}{S'_2}{\ATE}{\BTE''}
      \end{align*}
      for some \( R'_2 \) and \( S'_2 \) by the induction hypothesis.
      By applying the induction hypothesis, this time to \( \sconfig{R_2}{H_2}{s_2}\eval \sstate{R_1}{H_1} \), we get
      \begin{align*}
        \sconfig{R'_2}{S'_2}{\iexp_2}\eval \sstate{R'_1}{S'_1}
        \quad \text{and} \quad
        \relRH{R_1}{H_1}{R'_1}{S'_1}{\ATE}{\BTE'}
      \end{align*}
      for some \( R'_1 \) and \( S'_1 \).
    \item[\rn{R-IfTrue}]
      In this case, we must have
      \[
      \infer{R(x) = 0\andalso \sconfig{R}{H}{s_1}\eval \sstate{R_1}{H_1}}
      {\sconfig{R}{H}{\ifexp{x}{s_1}{s_2}} \eval \sstate{R_1}{H_1}}
      \]
      where \( s \equiv \ifexp{x}{s_1}{s_2} \).
      Thus, we must also have
      \begin{align*}
        \bj{\ATE}{{\color{magenta} \fml\land x=0}}{s_1}{\fml'}{\iexp_1}
        \quad \text{and} \quad
        \bj{\ATE}{{\color{magenta} \fml \land x\ne 0}}{s_2}{\fml'}{\iexp_2}
      \end{align*}
      for \( \iexp \equiv \ifexp x {\iexp_1} {\iexp_2} \).
      By Lemma~\ref{lem:relRH-preserves-eval-expr}, we have \( \econfig {R'} x \eval 0 \), and thus, \( R', H, \Iseq \models x = 0 \).
      This together with \( \relRHS {R} H {R'} {S'} \ATE \fml \Iseq \) implies that \( \relRHS R H {R'} {S'} \ATE {\fml \land x = 0} \Iseq\).
      Hence, by the induction hypothesis, there exists \( \sstate {R'_1} {S'_1} \) such that \( \sconfig {R'} {H'} {\iexp_1} \eval \sstate {R'_1} {S'_1} \) and \( \relRH {R_1} {H_1} {R'_1} {S'_1} \ATE {\fml'}\).
      By \rn{R-IfTrue}, we also have \( \sconfig{R_1}{H_1} \iexp \eval \sstate {R'_1} {S'_1}\) as desired.
    \item[\rn{R-IfFalse}]
      Similar to the previous case.
    \item[\rn{R-ForExit}]
      In this case, we have
      \begin{align*}
        \infer{\econfig{R}{e}\eval k \andalso k\ge m}
        {\sconfig{R}{H}{\forexp{x}{e}{m}{n}{s_1}}\eval \sstate{R \set{x \mapsto k}}{H}}
      \end{align*}
      with \( s \equiv \forexp{x}{e}{m}{n}{s_1} \).
      By the inversion on the translation rule, we have
      \begin{align*}
        {\bj{\ATE}{[e/x]\fml''}
        {s}{\fml''  \land x\ge m}{\forexp{x}{e}{m}{n}{\iexp_1}}}
      \end{align*}
      where \( \iexp \equiv \forexp{x}{e}{m}{n}{\iexp_1} \) and \( \fml = [e/x]\fml'' \).
      Since \( \econfig{R}{e}\eval k \), we also have \( \econfig{R'}{e}\eval k \) by Lemma~\ref{lem:relRH-preserves-eval-expr}.
      Therefore, we have \( \sconfig {R'} {S'} \iexp \eval \sstate {R' \set{x \mapsto k}}{S'} \) using \rn{R-ForExit}.
      We can check that \( \relRHS{R \set{x \mapsto k}}{H}{R' \set{x \mapsto k}}{S'} \ATE {\fml'' \land x \ge m} \Iseq \) as in the case of \rn{R-ReadArray} and \rn{R-Assign}.
      The only difference is that, in this case, we additionally need to check \( R' \set{x \mapsto k}, H, \Iseq \models x \ge m \), but this trivially holds because \( R' \set{x \mapsto k}(x) = k \ge m \).
    \item[\rn{R-ForLoop}]
      In this case, the last rule of the derivation is
      \begin{equation*}
      \frac{\splitfrac{\econfig{R}{e}\eval k \andalso k < m \quad
          \sconfig{R\set{x\mapsto k}}{H}{s'}\eval \sstate{R_1}{H_1}}{
        s' \equiv {s_1; \forexp{x}{x+n}{m}{n}{s_1}}}}
    {\sconfig{R}{H}{\forexp{x}{e}{m}{n}{s_1}}\eval \sstate{R_1}{H_1}}
      \end{equation*}
      with \( s \equiv \forexp{x}{e}{m}{n}{s_1} \).
      Observe that this means that there exists a state \( \sstate {R_2}{H_2} \) such that
      \( \sconfig {R\set{x \mapsto k}} {H} {s_1} \eval \sstate {R_2} {H_2} \) and \( \sconfig {R_2} {H_2} {\forexp{x}{x+n}{m}{n}{s_1}} \eval \sstate {R_1} {H_1}\).
      By the inversion on the translation rule, we have
      \begin{align}
        \bj{\ATE}{{\color{magenta}\fml'' \land x < m}}{s_1}{[x+n/x]\fml''}{\iexp_1} \label{eq:lem:correctness-buf:forloop}
      \end{align}
      for some \( \iexp_1 \) and \( \fml'' \) such that
      \begin{gather*}
      \iexp \equiv \forexp{x}{e}{m}{n}{\iexp_1},\\
      \fml = [e/x]\fml'', \text{ and }  \fml' = {\color{magenta}\fml'' \land x \ge m}.
      \end{gather*}
      It is easy to check that we have \( \relRH{R \set{x \mapsto k}}{H}{R'\set{x \mapsto k}}{S'} \ATE {\fml'' \land x < m} \).
      \sk{TODO: write the details}
      By applying the induction hypothesis to  \( \sconfig {R\set{x \mapsto k}} {H} {s_1} \eval \sstate {R_2} {H_2} \), we obtain a state \( \sstate {R'_2} {S'_2} \) such that \( \sconfig {R' \set{x \mapsto k}} {S'} {\iexp_1}\eval \sstate {R'_2} {S'_2} \) and \( \relRH {R_2} {H_2} {R'_1} {S'_2} \ATE { \sub {x + n} x \fml''} \).
      Now we apply the induction hypothesis to \( \sconfig {R_2} {H_2} {\forexp{x}{x+n}{m}{n}{s_1}} \eval \sstate {R_1} {H_1}\).
      To this end, we first show that
      \begin{align*}
        \ATE \vdash \{\sub {x + n} x \fml''\} &{\forexp{x}{x+n}{m}{n}{s_1}}\\
                                                &\Longrightarrow{\forexp{x}{x+n}{m}{n}{u_1}} \{\fml'\}
      \end{align*}
      This is derivable by applying \rn{R-For} to \eqref{eq:lem:correctness-buf:forloop}.
      We can now apply the induction hypothesis, and obtain a state \( \sstate {R_1'} {S_1'} \) such that \( \sconfig {R'_2} {H'_2} {\forexp{x}{x + n}{m}{n}{\iexp_1}} \eval \sstate {R'_1} {S'_1} \) and \( \relRH {R_1} {H_1} {R'_1} {S'_1} \ATE {\fml'' \land x \ge m }\).
      Using the rule \rn{R-ForLoop}, we also have \( \sconfig {R'} {S'} {\iexp} \eval \sstate {R'_1} {S'_1} \) as desired.

  \end{description}
\qed
\end{proof}
We only sketch the proof for the opposite direction since the proof is essentially the same.
\begin{proof}[Proof Sketch of (2)]
  By induction on the derivation of \(\sconfig{R'}{S'}{u} \eval \sstate{R'_1}{S'_1} \) with a case analysis on the last rule used.
  The nontrivial cases are the case when the rules \rn{R-Read} or \rn{R-Write} are used or a buffer is accessed.
  Since the proof for the other cases are identical to those of (1) we omit these cases.
  That is, if the last rule used in the derivation are not the above mentioned rules, then we can use the same rule for the source program to obtain a matching transition; proving the relation \( \relRH {R_1}{H_1}{R'_1}{\Strm'_1} \ATE \fml \) is done as in the proof of (1).

  If \rn{R-Read} or \rn{R-Write} are the last rule used, then \( u \) is of the form \(x := \RStr a \) or \(  \WStr a x \).
  In these cases, the last rule applied must be (if we ignore the consequence rule) \rn{Tr-InsertL} or \rn{Tr-InsertR}.
  These cases follow from Lemma~\ref{lem:buf-trans-insert-respects-invariants}.

  The remaining interesting case is the case \rn{R-Assign}.
  That is, the case where the last rule applied is of the form:
  \begin{align*}
    \infer{\econfig{R'}{e}\eval n}
    {\sconfig{R'}{S'}{x:=e} \eval \sstate{R'\set{x\mapsto n}}{S'}}
  \end{align*}
  By the definition of the translation rules, the only cases where this rule involves a buffer are the when (i) \( x : \BUF \) or (ii) \( e \) is a buffer \( b \) and \( x : \INT \).

  We first consider the case (i).
  In this case let us rename \( x \) to \( b \) so that it is clear that the variable is of type \( \BUF \).
  By inversion on the transformation rule, we must have
  \begin{gather*}
    e \equiv y \quad  \ATE(y)=\INT \\
    u \equiv b:= y \quad s \equiv a[e'] := y  \quad \ATE(a)=\WARR \quad \ATE\p e':\INT \\
    {\bj{\ATE}{e'\not\in \iset{a}\land [y/b, \updateAr{a}{e'}{y}/a]\fml'}{a[e'] := y}{\fml'}{b := y}} \\
    \fml \equiv e'\not\in \iset{a}\land [y/b, \updateAr{a}{e'}{y}/a]\fml'.
  \end{gather*}
  for some \( y \) and \( e' \).
  As the matching reduction we can take \( \sconfig R H {s} \eval \sstate{R}{H \set{(a,m) \mapsto R(y)}}\), where \( \econfig R {e'} \eval m \), using \rn{R-WriteArray}.
  We are left to check that \( \relRH {R} {H \set{(a,m) \mapsto R(y)}} {R'\set{b \mapsto R'(y)}} {S'} \ATE {\fml'}\) under the assumption that \( \relRH R H {R'} {S'} \ATE \fml \).
  We have already showed this in the proof of (1) (the case for \rn{R-WriteArray}).

  Now we consider the case (ii).
  In this case, the last transformation rule applied must be \rn{Tr-READMem}, and thus, we must have
    \begin{gather*}
      u \equiv x := b \quad s \equiv x := a[e']  \\
      \ATE(x)= \INT \quad \ATE(a)=\RARR \quad \ATE\p e':\INT \quad \ATE(b)=\BUF \\
      \bj{\ATE}{b=a[e]\land [b/x]\fml'}{x := a[e']}{\fml'}{x := b} \\
      \fml \equiv b=a[e]\land [b/x]\fml'
  \end{gather*}
    We can take \( \sconfig R H s \eval \sstate {R \set{R \mapsto H(a, m)}} H \) as the matching transition.
    Here, we used the rule \rn{R-ReadArray} and \( m \) is the evaluation result of \( e' \).
    It remains to show that \( \relRH {R \set{ x \mapsto H(a, m) }} H {R'\set{x \mapsto R'(b)}} {S'} \ATE {\fml'}\) under the assumption that \( \relRH R H {R'} {S'} \ATE \fml \).
    Again, this is something we already proved in the proof of (1) (the case for \rn{R-ReadArray}).
    \qed
  \end{proof}

\fi
\end{document}